\newcommand{\bc}{\begin{center}}
\newcommand{\ec}{\end{center}}
\newcommand{\ba}{\begin{array}}
\newcommand{\ea}{\end{array}}
\newcommand{\be}{\begin{eqnarray}}
\newcommand{\ee}{\end{eqnarray}}
\newcommand{\bel}{\begin{eqnarray}\label}
\newcommand{\eel}{\end{eqnarray}}
\newcommand{\bes}{\begin{eqnarray*}}
\newcommand{\ees}{\end{eqnarray*}}
\newcommand{\bn}{\begin{enumerate}}
\newcommand{\en}{\end{enumerate}}
\newcommand{\KL}{\mathsf{KL}}
\newcommand{\bern}{\mathsf{Bern}}
\newcommand{\unif}{\mathsf{Unif}}
\newcommand{\aucb}{\textsc{UCBid }}
\newcommand{\expthree}{\textsc{Exp3}}
\newcommand{\exptree}{\textsc{ExpTree}}
\newcommand{\exptreep}{\textsc{ExpTree.P}}
\newcommand{\tinyplus}{\scalebox{0.7}{+}}
\newcommand{\tinyminus}{\scalebox{0.8}{-}}
\definecolor{MIT}{cmyk}{.24, 1.00, .78, .17} 
\definecolor{pink}{cmyk}{0, 1, 0, 0} 
\definecolor{darkgreen}{cmyk}{1,0, 1, 0}
\newcommand{\iid}{{\it i.i.d.\ }}
\newtheorem{lemma}{Lemma}
\newtheorem{definition}{Definition}
\newtheorem{theorem}{Theorem}
\newcommand{\indic}[1]{\mathds{1}\{#1\}}
\newcommand{\simiid}{\overset{\text\small\rm{iid}}{\sim}}
\newcommand{\DS}{\displaystyle}
\newcommand{\cB}{\mathcal{B}}
\newcommand{\cF}{\mathcal{F}}
\newcommand{\cL}{\mathcal{L}}
\newcommand{\cW}{\mathcal{W}}
\newcommand{\R}{{\rm I}\kern-0.18em{\rm R}}
\newcommand{\h}{{\rm I}\kern-0.18em{\rm H}}
\newcommand{\K}{{\rm I}\kern-0.18em{\rm K}}
\newcommand{\p}{{\rm I}\kern-0.18em{\rm P}}
\newcommand{\E}{{\rm I}\kern-0.18em{\rm E}}
\newcommand{\Z}{{\rm Z}\kern-0.18em{\rm Z}}
\newcommand{\1}{{\rm 1}\kern-0.24em{\rm I}}
\newcommand{\N}{{\rm I}\kern-0.18em{\rm N}}
\newcommand{\x}{\mathcal{X}}
\newcommand{\pn}{\p_{\kern-0.25em n}}
\newcommand{\pnm}{\p_{\kern-0.25em n,m}}
\newcommand{\psubm}{\p_{\kern-0.25em m}}
\newcommand{\psubp}{\p_{\kern-0.25em p}}
\newcommand{\cfi}{\cF_{\kern-0.25em \infty}}
\newcommand{\argmax}{\mathop{\mathrm{argmax}}}
\newcommand{\eps}{\varepsilon}
\newlength{\minipagewidth}
\begin{document}

\begin{frontmatter}

\title{Online learning in repeated auctions.}
\runtitle{Learning in auctions}

 \author{ \fnms{Jonathan} \snm{Weed}\thanksref{t2, t3}\ead[label=jweed]{jweed@mit.edu},
 \fnms{Vianney} \snm{Perchet}\thanksref{t4}\ead[label=vianney]{vianney.perchet@normalesup.org} \and
 \fnms{Philippe} \snm{Rigollet}\thanksref{t2}\ead[label=rigollet]{rigollet@math.mit.edu}
}

\affiliation{Massachusetts Institute of Technology, Universit\'e Paris Diderot, and Massachusetts Institute of Technology}

\thankstext{t3}{Supported in part  by NSF Graduate Research Fellowship DGE-1122374.}
\thankstext{t2}{Supported in part by NSF grants DMS-1317308 and CAREER-DMS-1053987.}
\thankstext{t4}{Supported in part by ANR grant ANR-13-JS01-0004-01.}

\address{{Vianney Perchet}\\
{LPMA, UMR 7599}\\
{Universit\'e Paris Diderot}\\
{8, Place FM/13}\\
{75013, Paris, France}\\
\printead{vianney}
}

\address{{Philippe Rigollet}\\
{Department of Mathematics} \\
{Massachusetts Institute of Technology}\\
{77 Massachusetts Avenue,}\\
{Cambridge, MA 02139-4307, USA}\\
\printead{rigollet}
}

\address{{Jonathan Weed}\\
{Department of Mathematics} \\
{Massachusetts Institute of Technology}\\
{77 Massachusetts Avenue,}\\
{Cambridge, MA 02139-4307, USA}\\
\printead{jweed}
}

\runauthor{Weed et al.}

\begin{abstract}
Motivated by online advertising auctions, we consider repeated Vickrey auctions where goods of unknown value are sold sequentially and bidders only learn (potentially noisy) information about a good's value once it is purchased. We adopt an online learning approach with bandit feedback to  model this problem and derive bidding strategies for two models: stochastic and adversarial. In the stochastic model, the observed values of the goods are  random variables centered around the true value of the good. In this case, logarithmic regret is achievable when competing against well behaved adversaries. In the adversarial model, the goods need not be identical and we simply compare our performance against that of the best fixed bid in hindsight. We show that sublinear regret is also achievable in this case and prove matching minimax lower bounds. To our knowledge, this is the first complete set of strategies for bidders participating in auctions of this type.
\end{abstract}

\begin{keyword}[class=AMS]
\kwd[Primary ]{62L05}
\kwd[; secondary ]{62C20}
\end{keyword}
\begin{keyword}[class=KWD]
Second price auctions, Vickrey auctions, Repeated auctions, Bandit problems \end{keyword}

\end{frontmatter}

\section{Introduction}
\label{SEC:intro}

Online advertising has been a driving force behind most of the recent work on online learning, particularly in the realm of bandit problems. During the first quarter of 2015 alone, internet advertising generated $\$13.3$ billion in revenue, according to the Internet Advertising Bureau. A large fraction of advertising space is sold on platforms known as \emph{ad exchanges} such as Google's DoubleClick and AppNexus, which facilitate transactions between the owner of advertising space and advertisers. These transactions occur within a fraction of a second using auctions \cite{Mut09}, thus placing the actors squarely within the framework of game-theoretic auctions with a single item and multiple bidders.
In this context, we refer to the  advertising space as the \emph{good}, its owner as the \emph{seller} and the advertisers as \emph{bidders}, respectively. From the seller's perspective, this is a well understood problem in  mechanism design: the Vickrey (a.k.a.\ second price) auction is optimal in the sense that each bidder bidding their private value constitutes an equilibrium. Because of this property, the Vickrey auction is said to be \emph{truthful}.

The seller may also maximize her revenue while maintaining truthfulness of the auction by optimizing a \emph{reserve price} below which no transaction occur. For example, when the bidders' values are drawn independently from known distributions, the optimal reserve price may be computed in closed form~\cite{Mye81, RilSam81}. The independence assumption was questioned already by Myerson \cite{Mye81} and it was shown later \cite{CreMcL88} that when the assumption is violated, the seller can take advantage of the situation to extract more revenue at the cost of a more complicated auction mechanism. In particular, this mechanism allows bidders to be charged even if they do not win the auction, which is arguably undesirable. 

In short, the Vickrey auction is a reasonable compromise between simplicity and optimality, which likely explains its prevalence on ad exchanges. Nevertheless, it suffers from a major limitation: it relies on perfect knowledge of the bidders' value distributions, which are unlikely to be known to the seller in practice \cite{Wil87}. This limitation has driven a recent line of work on approximately optimal auctions \cite{RouTalYan12,HarRou09,FuHarHoy13} that are robust to misspecification of these distributions.  In recent years the ubiquitous collection of data has presented new opportunities, insofar as unknown quantities, such as the bidders' value distributions or relevant functionals, may potentially be learned from past observations. This new paradigm has been investigated in several recent papers: \cite{CesGenMan13, ChaHarNek14,FuHagHar14,OstSch11,ColRou14,AmiCumDwo15,KanNaz14,DhaRouYan15, BluManMor15, MohMed14, AmiRosSye14}.  
One of the take-home messages of this literature is that a few observations are sufficient to maximize the seller's revenue in the Vickrey auction. This not surprising since all that needs to be learned is the reserve price.

Strikingly, all the aforementioned work adopts the seller's perspective and focuses on designing mechanisms to maximize the seller's revenue. In this work, we take the perspective of a bidder engaged in repeated Vickrey auctions.  In the present paper, we identify and analyze several strategies that can be employed by a bidder in order to maximize his reward while simultaneously learning the value of a good sold repeatedly. This paradigm can be expressed as a learning problem with partial feedback, or \emph{bandit problem} \cite{BubCes12}. We are aware of only one other paper that takes the bidder's perspective~\cite{McA11} where using bandit strategies for bidding is suggested. 
Repeated auctions have been studied in the bandit framework, primarily in the context of \emph{truthful} bandits \cite{DevKak09,BabKleSli10,BabShaSli09}. However, this line of literature also takes the seller's point of view and aims at designing an auction mechanism rather than designing an optimal bidding strategy under the constraint of a simple mechanism such as the Vickrey auction.

More generally, the problem we describe falls into the category of \emph{partial monitoring games}, in which the learner receives only limited feedback about the loss associated with a given action. By analyzing the feedback structure of such games, it is possible to develop essentially optimal algorithms for many games in this class~\cite{BarFosPal14}. However, the performance guarantees of these algorithms degrade drastically as the number of actions increases. This renders these results unusable in our context, where the bidder's number of moves at each stage is essentially unbounded.

\section{Sequential Vickrey auctions}
\label{SEC:model}

We restrict our attention to bounded values and bids in the interval $[0,1]$.

Let us first recall the mechanism of a Vickrey auction for a single good. Each bidder $k \in [K+1]:=\{1,\ldots,K+1\}$ submits  a written bid $b[k]\in [0,1]$. The highest bidder $k^{\star} \in \argmax_k b[k]$ wins the auction and pays the second highest bid $m^{\star}=\max_{k \neq k^{\star}} b[k]$. In case of ties, the winner  is chosen uniformly at random among the highest bidders.

Each bidder $k \in [K+1]$ has a private but unknown individual value $v[k] \in [0,1]$, which represents the utility of the good. Note that this value is {independent} of the auction itself and is only measured by the bidder once the good is delivered to him. For example, in the case of advertising space, this value may be measured by the expected profit generated from this ad or the probability that it generates a click \cite{McA11}.  The reward of the winner is given by his \emph{net utility}  $v[{k^{\star}}]-m^{\star}$, while the reward of a loosing bidder is $0$.

Perhaps the most salient feature of the  Vickrey auction is that it is optimal for bidder $k$ to be \emph{truthful}, that is to bid $b[k] = v[k]$ (assuming that the bidder knows this value). Here optimality is understood in the equilibrium sense: any other bid $b[k] \neq v[k]$, even random, could never lead to a strict improvement in expected utility and might lead to a net loss for that bidder.
An implicit crucial assumption for the implementability of this bidding strategy is that each bidder must know his own value, a hypothesis that is not necessarily met in online repeated auctions. Nevertheless, a bidder may \textsl{learn} the value $v[k]$ from past observations. Like bandit problems, this problem exhibits an exploration-exploitation tradeoff: Higher bids  increase the number of observations and thus give the bidder a more accurate estimate of the value $v[k]$ (exploration) while bids closer to the best estimate of the value at time $t$ are more likely to be optimal in the sense described above (exploitation). We will see that auctions when viewed as bandit problems possess an idiosyncratic information feedback structure: information is collected only for higher bids, but these should be avoided due to the phenomenon known as the winner's curse~\cite{Wil69}.

We consider a set of $T \ge 2$ goods $t \in [T]:=\{1,\ldots,T\}$ that are sold sequentially in a Vickrey auction. Using a slight abuse of terminology, we will also call the auction at which good $t$ is sold auction $t$. We take the point of view of  bidder 1, hereafter referred to as \emph{the bidder}, and denote respectively by  $v_t, b_t, m_t \in [0,1]$  the unknown private value of  the bidder for the  $t^{\text{th}}$ good, his bid and the maximum bid of all other bidders for this good.   Without loss of generality\footnote{This can been achieved at an arbitrarily small cost by slightly perturbing original bids randomly.}, we assume that bids are never equal.  At time $t \ge 2$, the bidder is aware of the outcomes of past auctions\footnote{The bidder  knows $m_t$ for auctions that he won since it is the paid price, and we assume that the winning bid $m_t$ at times when he lost is made available publicly after each auction in order to incentivize higher future bids.}  $\{(b_s, m_s),\, s \in [t-1]\}$ as well as a (potentially noisy) measurement of the values of goods $[t-1]$ \emph{at times when the bidder won the auction}. Our goal is to construct bidding strategies that mitigate potential losses (overbidding) and opportunity cost (underbidding) for  the bidder.

We consider two  generating processes for   the sequence of values $\{v_t\}_t$: \emph{stochastic} and \emph{adversarial}.
The stochastic setup is the most benign one: consecutive values $\{v_t\}_t$  are independent and identically distributed (i.i.d.) random variables in the unit interval $[0,1]$.  On the other side of the stationarity spectrum is the adversarial setup, where the sequence $\{v_t\}_t$ may be any  sequence in $[0,1]$. This framework has become quite standard in the online learning literature  \cite{CesLug06, BubCes12} where a game-theoretic setup prevails and arbitrary dependencies between rounds occur.

\section{The stochastic setup}
\label{SEC:stoch}

Recall that consecutive values $\{v_t\}_t$  are independent and identically distributed (i.i.d.) random variables in the unit interval $[0,1]$. Let $v=\E[v_t]$ denote the common expected value of these random variables. It is easy to see that the expected net utility of the bidder at time t, $ \E(v_t-m_t)\mathds{1}\{ b_t > m_t \}$, is maximized at $b_t\equiv v$. Therefore, a constant bid equal to $v$ is optimal among all sequences of deterministic bids. This implies that the Vickrey auction is truthful in expectation. Since $v$ is unknown, the bidder may not be able to achieve the best net utility over $t$ rounds, so his performance is measured by his (cumulative) \textsl{pseudo-regret}\footnote{The benchmark in the (true) regret is the  \emph{random bid} that maximizes $b \mapsto  \sum_{t=1}^T(v_t-m_t)\mathds{1}\{ b > m_t \}$. This quantity is more difficulty to control and yields worse bounds, as detailed in Section~\ref{SEC:adversarial}.}  $\bar R_T$ defined by
\begin{equation}
\label{EQ:PseudoRegret}
\bar R_T= \max_{b \in [0,1]} \sum_{t=1}^T \E (v_t-m_t)\mathds{1}\{ b > m_t \}- \sum_{t=1}^T\E(v_t-m_t)\mathds{1}\{ b_t > m_t \},
\end{equation}
where the expectations are taken with respect to the randomness in $v_t$ and possibly in $m_t$ if the other bidders are playing randomly. Regret and pseudo-regret as measures of performance are studied primarily in the bandit literature but rarely in the context of auctions. Interestingly, using regret as a measure of performance allows us to take opportunity cost into account. Indeed, a net utility of zero can be obtained trivially at any round by bidding zero, but if the other bidders tend to bid below the value of the good, the regret will still scale linearly in $T$.

 \begin{wrapfigure}{L}{0.42\textwidth}
    \begin{minipage}{0.4\textwidth}
\begin{algorithm}[H]
\begin{algorithmic}
\STATE{\textbf{Input:} $b_1=1$, $\omega=1$, $\bar v=v_1$,}
\FOR{$t=2, \dots, T$}
\STATE{Bid $b_{t} =\min\Big( \overline{v} + \sqrt{\frac{3 \log t}{2\omega}}, 1\Big)$}
\STATE{Observe $m_t$}
\IF{$b_t>m_t$ (win auction)}
\STATE{Observe $v_t$}
\STATE{$\bar v\leftarrow (\omega\bar v + v_t)/(\omega+1)\,, \quad \omega\leftarrow\omega+1$}
\ENDIF{}
\ENDFOR
\end{algorithmic}
\caption{\aucb}
\label{UCBid}
\end{algorithm}
    \end{minipage}
  \end{wrapfigure}

We introduce a bidding strategy called \aucb because it is inspired by the {\sc UCB} algorithm \cite{LaiRob85, AueCesFis02} but tailored to the auction setup under investigation (See Algorithm~\ref{UCBid}). For the first auction, it prescribes to place the bid $b_1=1$ and thus win the auction. At auction $t+1$, $t \ge 1$, this strategy prescribes to place the bid $b_{t+1}$ defined by 
\[ b_{t+1} =\min\Big( \overline{v}_{\omega_t} + \sqrt{\frac{3 \log t}{2\omega_t}}, 1\Big)\, , 
\] 
where $\omega_t$ is the number of auctions won up to stage $t$ and $\overline{v}_{\omega_t}= \sum_{s=1}^{\omega_t} v_{\tau_s}/\omega_t$ with $\tau_s$ being the stage of the   $s^{\text{th}}$ won auction.  Interestingly, the \aucb strategy does not require the knowledge of past bids of other bidders $\{m_1, \dots, m_{t-1}\}$. This feature is particularly attractive in the setup of ad exchanges, where the process takes place so fast that it may be useful for the platform to not communicate the cost of an auction to bidders until the end of the day, for example. While the implementation of the \aucb strategy does not require the knowledge of $\{m_t\}_t$, its performance is affected by other bids that are larger but close to the optimal bid $v$. This is not surprising as such bids force the bidder to overpay in order to collect information about the unknown $v$. However, sub-linear regret of order $\sqrt{T}$ is achievable regardless of the sequence $\{m_t\}_t$. We prove two results that show that this strategy automatically \emph{adapts} to more favorable sequences $\{m_t\}_t$.

\subsection{Pseudo-regret bounds}

\begin{theorem}
\label{TH:sto1}
Consider the stochastic setup where the values $v_1, \ldots, v_T \in [0,1]$ are independent such that $\E[v_i]=v$. For any sequence $m_1, \ldots, m_T \in [0,1]$ such that $m_t$ is independent of $v_t$,  the \aucb strategy yields pseudo-regret bounded as follows:
$$
\bar R_T \le 3 +\frac{12\log T}{\Delta}\wedge 2\sqrt{6T\log T}\,,
$$
where $x\wedge y=\min(x,y)$ and  $\Delta \in [0, 1]$ is the largest number such that no bid $m_t$ is the interval $(v, v + \Delta)$.%
\end{theorem}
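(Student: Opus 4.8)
The plan is to put the benchmark in closed form, reduce the pseudo-regret to a sum of per-round contributions, and then run a UCB-style accounting in which \emph{both} terms of the claimed bound arise from summing one and the same quantity in two different ways. Because $v_t$ is independent of $m_t$ with $\E[v_t]=v$, for every fixed $b$ one has $\E[(v_t-m_t)\mathds{1}\{b>m_t\}]=\E[(v-m_t)\mathds{1}\{b>m_t\}]\le\E[(v-m_t)_+]$ pointwise, with equality at $b=v$; hence the maximizing bid is $v$ and the benchmark equals $\sum_t\E[(v-m_t)_+]$. Since $b_t$ is measurable with respect to the information available before auction $t$, it is independent of $v_t$, so
\[
\bar R_T=\sum_{t=1}^T\E\big[(v-m_t)_+-(v-m_t)\mathds{1}\{b_t>m_t\}\big].
\]
Splitting the $t$-th summand by the sign of $v-m_t$ produces two nonnegative contributions: an \emph{underbidding} term $(v-m_t)\mathds{1}\{b_t\le m_t<v\}$ and an \emph{overbidding} term $(m_t-v)\mathds{1}\{b_t>m_t>v\}$, while the first round (bid $b_1=1$) contributes at most $1$.

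The underbidding term is nonzero only when $b_t<v$, i.e.\ when the upper confidence bound underestimates $v$. Writing $\bar v_w$ for the average of the first $w$ observed values and using that these form an i.i.d.\ sequence with mean $v$ (a consequence of $v_t$ being independent of the past, so that conditioning on winning does not bias the observed value — the usual optional-skipping fact), Hoeffding's inequality gives $\P\big(\bar v_w-v<-\sqrt{3\log t/(2w)}\big)\le t^{-3}$ for each fixed $w$. A union bound over $w\in\{1,\dots,t-1\}$ yields $\P(b_t<v)\le t^{-2}$, so the total underbidding regret is at most $\sum_{t\ge2}t^{-2}<1$.

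The overbidding term is the crux, and here I would exploit the self-correcting nature of the feedback: every overbid is a \emph{win}, hence increments $\omega$. On the good event $\{\bar v_{\omega_{t-1}}\le v+W_t\}$, with $W_t=\sqrt{3\log t/(2\omega_{t-1})}$, an overbid forces $m_t<b_t\le v+2W_t$, so its cost obeys $m_t-v<2W_t=\sqrt{6\log t/\omega_{t-1}}$; the complementary event again costs at most $t^{-2}$ in expectation by the same union bound. Since $\omega$ is nondecreasing and jumps by exactly one at each win, for each integer value $w$ there is \emph{at most one} overbidding round with $\omega_{t-1}=w$. Summing the cost over overbidding rounds therefore gives, pathwise on the good event,
\[
\sum_{\text{overbids}}(m_t-v)\ \le\ \sqrt{6\log T}\sum_{w=1}^{\omega_T}\frac{1}{\sqrt w}\ \le\ \sqrt{6\log T}\cdot 2\sqrt{T}\ =\ 2\sqrt{6T\log T}.
\]
For the gap-dependent bound I would additionally use that an overbid requires $m_t-v\ge\Delta$ (by definition of $\Delta$, no $m_t$ lies in $(v,v+\Delta)$), so $2W_t\ge\Delta$ forces $\omega_{t-1}\le 6\log T/\Delta^2$; restricting the range of $w$ accordingly turns the same sum into $\sqrt{6\log T}\cdot 2\sqrt{6\log T}/\Delta=12\log T/\Delta$. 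Taking the minimum and adding the $O(1)$ contributions (the first round together with the two union-bound remainders, which total at most $3$) yields the theorem.

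The delicate points are the concentration for the randomly-stopped average $\bar v_{\omega_{t-1}}$, handled by the optional-skipping union bound over the deterministic range of $w$, and the pathwise counting ``at most one overbid per value of $\omega$'', which is precisely what lets a single quantity generate both the $\log T/\Delta$ and the $\sqrt{T\log T}$ regimes; keeping the good-event bookkeeping consistent so that the leftover probabilities sum to the stated constant is the only remaining item requiring care.
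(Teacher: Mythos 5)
Your proposal is correct and follows essentially the same route as the paper: identify $b=v$ as the optimal fixed bid, split the per-round regret into underbidding and overbidding contributions, kill the underbidding term with a Hoeffding-plus-union bound over the win count, and control the overbidding term by noting that every overbid is a win (hence at most one overbid per value of $\omega$), with the gap $\Delta$ capping the number of relevant win counts at $6\log T/\Delta^2$. The only cosmetic difference is that you bound the overpayment pathwise by $2W_t$ on a good event, whereas the paper integrates the tail probability of the UCB exceeding $v+\Delta$; both yield the same sums and constants.
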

\begin{proof}
Since $v_t$ is independent of $(m_t,b_t)$ and $\E[v_t]=v$, we have
\begin{align*}
\bar R_T &= \max_{b \in [0,1]}\E\sum_{t=1}^T (v-m_t)\mathds{1}\{b>m_t\} - \E\sum_{t=1}^T (v-m_t)\mathds{1}\{b_t>m_t\}\\
&=\E\sum_{t=1}^T (v-m_t)\mathds{1}\{v>m_t\} - \E\sum_{t=1}^T (v-m_t)\mathds{1}\{b_t>m_t\}
\end{align*}
where in the second equality we used the fact that the supremum is attained at $b=v$ because  $(v-m_t)\mathds{1}\{ b > m_t \} \le (v-m_t)_{\tinyplus}=(v-m_t)\mathds{1}\{ v > m_t \}$, where $x_{\tinyplus}=\max(x, 0)$. 

Next, decomposing the regret on the the events $\{ b_t > m_t \}$ and $\{ b_t < m_t \}$, on which the bidder won and lost auction~$t$ respectively, we get
$$
(v-m_t)\mathds{1}\{v>m_t\} \le (v-m_t)\mathds{1}\{v>m_t>b_t\}+(v-m_t)_{\tinyplus}\mathds{1}\{b_t>m_t\}\,.
$$
This yields
\begin{align*} 
\bar R_T &\le \E\sum_{t=1}^T (v-m_t)_{\tinyplus}\mathds{1}\{b_t<v\} +\E \sum_{t=1}^T(m_t-v)_{\tinyplus}\mathds{1}\{v<m_t<b_t\}\nonumber\\
&\le \sum_{t=1}^T\p\{b_t<v\}+ \E\sum_{t=1}^T(m_t-v)\mathds{1}\{v<m_t<b_t\}\,.
\end{align*}
To control the first sum, using a union bound and Hoeffding's inequality, we get
$$
\p\{b_t<v\}\le \sum_{s=1}^t\p\big\{\bar v_s -v<- \sqrt{\frac{3\log t}{2s}} \big\}\le t^{-2}\,.
$$
so that
\begin{equation}
\label{EQ:reg_sto_1}
\bar R_T \le \frac{\pi^2}{6} + \E\sum_{t=1}^T(m_t-v)\mathds{1}\{v<m_t<b_t\}\,.
\end{equation}

Denote by $\omega_t$ the value of $\omega$ during the $t$th round. To control the second sum in~\eqref{EQ:reg_sto_1}, observe that, since $b_t>m_t$ implies that the bidder won auction $t$, we have $\omega_{t+1}=\omega_{t}+1$. Denote by $\cW=\{t\in [T]\,:\, b_t>m_t\}$ the set of auctions that the bidder has won. 
 If $m_t   \ge v+ \Delta$, we have
\begin{align*}
S&:=\E\sum_{t=1}^T(m_t-v)\mathds{1}\{v<m_t<b_t\}\\
&\le \E\sum_{t\in \cW}(m_t-v)\1\big\{\Delta < m_t -v< \bar v_{\omega_t}-v+ \sqrt{\frac{3 \log t}{2\omega_t}} \big\}\\
&\le  \sum_{t=1}^T\int_0^\infty\p\big\{\bar v_t+\sqrt{\frac{3 (\log T)}{2t}} -v>u+\Delta\big\}du\,.
\end{align*}
Using Hoeffding's inequality, we get
$$
\p\big\{\bar v_t-v>\sqrt{\frac{3 \log T}{2t}} +u \big\}\le T^{-3}e^{-u^2/2}\,.
$$
It yields, on the one hand, that for any $t \in [T]$,
\begin{align*}
\int_0^\infty\p\big\{\bar v_t+\sqrt{\frac{3 \log T}{2t}} -v>u+\Delta\big\}du&\le \sqrt{\frac{6 \log T}{t}}+ \int_0^\infty\p\big\{\bar v_t-v>\sqrt{\frac{3 \log T}{2t}} +u \big\}du\\
&\le  \sqrt{\frac{6 \log T}{t}}+ T^{-3}\sqrt{\frac{\pi}{2}}
\end{align*}
On the other hand, if  $t > t_\Delta:=6(\log T)/\Delta^2$, we have
\begin{align*}
\int_0^\infty\p\big\{\bar v_t+\sqrt{\frac{3 \log T}{2t}} -v>u+\Delta\big\}du&\le  \int_0^\infty\p\big\{\bar v_t-v>\sqrt{\frac{3 \log T}{2t}} +u \big\}du\le  T^{-3}\sqrt{\frac{\pi}{2}}\,.
\end{align*}
It yields
\begin{align*}
S\le \sum_{t=1}^{t_\Delta\wedge T} \sqrt{\frac{6 \log T}{t}}+\sqrt{\frac{\pi}{2}}\le \frac{12\log T}{\Delta}\wedge 2\sqrt{6T\log T}+\sqrt{\frac{\pi}{2}}\,.
\end{align*}
\end{proof}

Theorem~\ref{TH:sto1} shows an interesting phenomenon: While UCB type strategies are usually very sensitive to the assumption that the rewards are stochastic, this strategy is actually robust to \emph{any} sequence $\{m_t\}_t$ that may be generated by other bidders, including malicious ones, as long as $m_t$ is independent of the stochastic value $v_t$ for all $t$. Indeed, in this hybrid setup, where the $v_t$'s are random but the $m_t$'s may not be, the \aucb strategy exhibits a sublinear regret that can even be logarithmic in the favorable case where no bid $m_t$ is the interval $(v, v + \Delta)$ for some $\Delta>0$. It turns out that this condition can be softened and can be well captured by a simple margin condition under the assumption that the $m_t$'s are also stochastic. 

\subsection{Margin condition}

Assume in the rest of this section that $m_1, \ldots, m_T \simiid \mu$ for some unknown probability measure $\mu$. Borrowing terminology from binary classification~\cite{MamTsy99, Tsy06}, we define the \emph{margin condition} as follows.

\begin{definition}
A probability measure $\mu$ on $[0,1]$ satisfies the margin condition with parameter $\alpha>0$ around $v \in (0,1)$ if there exists a  constant $C_\mu>0$ such that 
$$
\mu\{(v,v+u]\} \le C_\mu u^{\alpha}\,\quad \forall\ u > 0\,.
$$
\end{definition}

The parameter $\alpha$ is an indication of the difficulty of the problem---the larger the $\alpha$, the easier the problem. Under the margin condition, we can interpolate between between the two bounds for the regret---$O(\log T)$ and  $O(\sqrt{T\log T})$---that arise in Theorem~\ref{TH:sto1}.

\begin{theorem}
\label{TH:sto2}
Fix $T \ge 2$ and consider the stochastic setup where the values $v_1, \ldots, v_T \in [0,1]$ are independent such that $\E[v_i]=v$. For any random sequence $m_1, \ldots, m_T \simiid \mu$, where $\mu$ on $[0,1]$ satisfies the margin condition with parameter $\alpha>0$ around $v \in (0,1)$,  the \aucb strategy yields pseudo-regret bounded as follows:
$$
\bar R_T \le \left\{
\begin{array}{lll}
\DS  c_1 T^{\frac{1-\alpha}{2}}\log^{\frac{1+\alpha}{2}} (T) &  \text{if } \alpha < 1\\
\DS  c_2 \log^2(T) &  \text{if } \alpha = 1\\
\DS  c_3 \log(T) &  \text{if } \alpha > 1
\end{array}\right.
$$
where $c_1, c_2$ and $c_3$ are positive constants that depend on $\alpha, v$ and $C_\mu$. 
\end{theorem}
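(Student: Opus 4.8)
The plan is to start exactly from the reduction already carried out in the proof of Theorem~\ref{TH:sto1}. The opportunity-cost term is handled identically: a union bound plus Hoeffding gives $\sum_t \p\{b_t<v\}\le \pi^2/6$, so by~\eqref{EQ:reg_sto_1} it remains only to bound the winner's-curse term
\begin{equation*}
S := \E\sum_{t=1}^T (m_t-v)\mathds{1}\{v<m_t<b_t\},
\end{equation*}
and the entire point is to exploit the margin condition where Theorem~\ref{TH:sto1} used the hard gap $\Delta$. The first step is to convert the margin condition into a per-round bound on the expected overpay. Conditioning on $\mathcal{F}_{t-1}$, the bid $b_t$ is fixed while $m_t\sim\mu$ is fresh and independent, so with $G(u)=\mu\{(v,v+u]\}$ and integration by parts,
\begin{equation*}
\E\big[(m_t-v)\mathds{1}\{v<m_t<b_t\}\mid \mathcal{F}_{t-1}\big]=\int_0^{(b_t-v)_+} u\,dG(u)\le (b_t-v)_+\,G\big((b_t-v)_+\big)\le C_\mu\big((b_t-v)_+\big)^{1+\alpha}.
\end{equation*}

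Next I would bound the bid overshoot by the confidence width. On the Hoeffding event $\{\bar v_{\omega_t}-v\le \sqrt{3\log t/(2\omega_t)}\}$ one has $(b_t-v)_+\le \sqrt{6\log t/\omega_t}$, and the complementary event contributes only $O(1)$ after summation, exactly as in the first display of the proof above. Hence, up to constants depending on $\alpha$ and $C_\mu$,
\begin{equation*}
S\lesssim \E\sum_{t=1}^T\Big(\frac{\log t}{\omega_t}\Big)^{\frac{1+\alpha}{2}}+O(1).
\end{equation*}
The crux, and what I expect to be the \emph{main obstacle}, is controlling the random won-auction count $\omega_t$ in the denominator. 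Its increments are not independent, since $b_t$ (hence the chance of winning round $t$) depends on the whole past. The quantitative fact I would establish is that the per-round winning probability is bounded below: because $b_t\ge \bar v_{\omega_t}$ concentrates near $v$, the bidder wins whenever $m_t$ falls below $v$, an event of probability $\mu([0,v))$, so a multiplicative Chernoff or supermartingale argument yields $\omega_t\gtrsim t$ with probability $1-O(t^{-2})$. (This lower bound is exactly where the positivity of the mass that $\mu$ places below $v$ enters; degenerate $\mu$ concentrated just above $v$ would force a separate, more delicate analysis of how fast $\omega_t$ grows.)

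Substituting the high-probability bound $\omega_t\gtrsim t$ reduces $S$ to the deterministic series $\sum_{t=1}^T (\log t)^{(1+\alpha)/2}\,t^{-(1+\alpha)/2}$, and the three regimes of the statement fall out of its elementary asymptotics: it is $\asymp T^{(1-\alpha)/2}(\log T)^{(1+\alpha)/2}$ for $\alpha<1$, $\asymp(\log T)^2$ for $\alpha=1$, and convergent for $\alpha>1$; in the super-critical case one must keep the factor $\log t$ attached to each summand rather than bounding it by $\log T$ in order to land exactly at the $\log T$ rate. An equivalent and perhaps more transparent route to the same three exponents is a dyadic peeling of the overpay $m_t-v$ into scales $2^{-j}$: a scale-$j$ overpaying win forces the confidence width to exceed $2^{-j}$, hence can occur only while $\omega_t\lesssim 4^{j}\log T$, while the margin makes each such event have probability at most $C_\mu 2^{-j\alpha}$; combining this structural count with the margin probability and summing the geometric series over $j$ reproduces precisely the crossover between the three rates. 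Either way, the genuinely nontrivial ingredient remains the linear-in-$t$ lower bound on $\omega_t$; the rest is the margin estimate plus bookkeeping.
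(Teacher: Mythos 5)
Your reduction to the winner's-curse term $S$ and your per-round use of the margin condition, $\E[(m_t-v)\1\{v<m_t<b_t\}\mid\cF_{t-1}]\le C_\mu((b_t-v)_{\tinyplus})^{1+\alpha}$, match the paper's proof. The genuine gap is the step you yourself flag as the crux: the claimed high-probability lower bound $\omega_t\gtrsim t$ is false for admissible $\mu$, and the case where it fails is not a degenerate corner that can be deferred. The margin condition constrains $\mu$ only on intervals $(v,v+u]$; it is perfectly consistent with $\mu([0,v])=0$ --- for instance a point mass at $v+\delta$ with $C_\mu\delta^\alpha\ge 1$, or any density supported on $(v+\delta,1]$. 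For such $\mu$ the bidder wins only while the confidence width exceeds the gap, so $\omega_t$ grows like $\log t$, not linearly, and your substitution $\omega_t\asymp t$ collapses. Moreover, even when $\mu([0,v))>0$, your constants would depend on $\mu([0,v))$, whereas the theorem allows dependence only on $\alpha$, $v$ and $C_\mu$.

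The missing idea --- and what the paper actually does --- is that no lower bound on $\omega_t$ is needed at all. Overpayment accrues only on rounds the bidder wins, and the counter increments on every win, so the sum over won rounds of any nonnegative function of $\omega_t$ can be re-indexed by the win counter: $\sum_{t\in\cW}f(\omega_t)\le\sum_{s=1}^{T}f(s)$. Applying the margin condition conditionally at the $s$-th won auction gives an expected contribution of order $C_\mu(\log T/s)^{(1+\alpha)/2}$ (plus a Hoeffding correction for $\bar v_s-v$), and summing over $s\le T$ produces exactly the three regimes. Your dyadic-peeling aside contains the germ of this fact (the ``structural count'' $4^{j}\log T$ of scale-$j$ wins is precisely the re-indexing argument), but as sketched --- taking the minimum of the structural count and $T$ times the margin probability --- it yields the exponent $T^{1/(2+\alpha)}$, which is strictly worse than $T^{(1-\alpha)/2}$ for every $\alpha>0$; and your closing assertion that the linear-in-$t$ growth of $\omega_t$ remains the essential ingredient ``either way'' is exactly the point at which the argument goes wrong.
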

\begin{proof}
We will prove the following bound:
$$
\bar R_T \le \left\{
\begin{array}{lll}
\DS & C_\mu\Big(\frac{12}{1-\alpha} T^{\frac{1-\alpha}{2}}\log^{\frac{1+\alpha}{2}} T +1\Big) &  \text{if } \alpha <1\\
\DS &   6C_\mu \Big(\log(T)+1\Big)^2 & \text{if } \alpha =1\\
\DS & 6 \log (T)\Big(1+ 2\frac{C_\mu}{\alpha\wedge 2-1}\Big)+\frac{4C_\mu}{\alpha\wedge 2-1}+1&  \text{if } \alpha >1
\end{array}\right.
$$
Recall from the proof of Theorem~\ref{TH:sto1} that
\begin{align*}
S&:=\E\sum_{t=1}^T(m_t-v)\mathds{1}\{v<m_t<b_t\}\\
&\le \E\sum_{t\in \cW}(m_t-v)\1\big\{0 < m_t -v< \bar v_{w_t}-v+ \sqrt{\frac{3 \log t}{2w_t}} \big\}\\
&\le  \E\sum_{t\in \cW}  \Big(\bar v_{w_t}-v+ \sqrt{\frac{3 \log T}{2w_t}}\Big)     \1\big\{0 < m_t -v< \bar v_{w_t}-v+ \sqrt{\frac{3 \log T}{2w_t}} \big\}\wedge 1\\
& \le \E\sum_{t=1}^T  \Big(\bar v_{t}-v+ \sqrt{\frac{3 \log T}{2t}}\Big)     \1\big\{0 < m_t -v< \bar v_{t}-v+ \sqrt{\frac{3 \log T}{2t}} \big\}\wedge 1\, ,
\end{align*}
where we used the fact that bids always belong to $[0,1]$. 
Using the margin condition, we get that for $\alpha \ge 0$
\begin{align*}
S&\le  C_\mu \E \sum_{t=1}^T \Big(\bar v_{t}-v+ \sqrt{\frac{3 \log T}{2t}}\Big)_{\tinyplus} ^{1+\alpha}\wedge 1 \,.
\end{align*}
Hoeffding's inequality yields that $\p\{ \bar v_t -v \ge \varepsilon\} \le e^{-2 t\varepsilon^2}$,  thus we get that
\begin{align*}
 \E  \Big(\bar v_{t}-v+ \sqrt{\frac{3 \log T}{2t}}\Big)_{\tinyplus} ^{1+\alpha} & \le (1+\alpha) \int_{- \sqrt{\frac{3 \log T}{2t}}}^\infty \Big(\varepsilon + \sqrt{\frac{3\log T}{2t}}\Big)^\alpha e^{-2t\varepsilon^2}d\varepsilon \\
 & \le \frac{1+\alpha}{t^{\frac{1+\alpha}{2}}} \int_{-\sqrt{\frac{3\log T}{2}}}^\infty \Big(s + \sqrt{\frac{3\log T}{2}}\Big)^\alpha e^{-2s^2}ds\\
 & \le \Big(\frac{6\log T}{t}\Big)^{\frac{1+\alpha}{2}}+\frac{1+\alpha}{2t^{\frac{1+\alpha}{2}}}\int_{\sqrt{6\log T}}^\infty u^\alpha e^{-u^2/2}du\, .
\end{align*}
As a consequence, if $\alpha \le 1$, we obtain
$$
S \le  C_\mu \sum_{t=1}^T \Big(\frac{6 \log T}{t}\Big)^{\frac{1+\alpha}{2}}+\frac{C_\mu}{T^2}  \,.
$$
and for $\alpha <1$, this yields that 
\begin{align*}
S&\le  C_\mu\Big(\frac{12}{1-\alpha} T^{\frac{1-\alpha}{2}}\log^{\frac{1+\alpha}{2}} T +1\Big)\, ,
\end{align*}
while, for $\alpha =1$, we get 
\begin{align*}
S&\le  6C_\mu \Big(\log(T)+1\Big)^2 \, .
\end{align*}

When $\alpha \leq 2$, it holds that $$\int_{\sqrt{6 \log T}}^\infty u^\alpha e^{-u^2/2}du \leq \int_{\sqrt{6 \log T}}^\infty u^2 e^{-u^2/2}du\leq2$$ hence
\begin{align*}
 S &\le 6 \log T+1+C_\mu \Big(\sum_{t=\lceil6\log T\rceil +1}^T \Big(\frac{6 \log T}{t}\Big)^{\frac{1+\alpha}{2}} +\frac{1+\alpha}{t^{\frac{1+\alpha}{2}}}\Big)\\
 &\le 6 \log (T)\Big(1+ 2\frac{C_\mu}{\alpha-1}\Big)+\frac{4C_\mu}{\alpha-1}+1\, . \end{align*}
For bigger values of $\alpha$, we shall use the fact that if the margin condition is satisfied for $\alpha\geq 2$, then it is also satisfied for the value $\alpha=2$. As a consequence, plugging the value $\alpha=2$ in the above equation, we obtain that
$$
S \leq 6\log(T)\Big(1+2C_\mu\Big)+ 4C_\mu+1\, .
$$
\end{proof}

As we can see from Theorem~\ref{TH:sto2}, the margin parameter $\alpha$ allows us to interpolate between $O(\log T)$ and $O(\sqrt{T})$ regret bounds. Since \aucb does not require the knowledge of $\alpha$, we say that it is \emph{adaptive to the margin parameter $\alpha$}.

In fact, the above result holds, with the exact same proof, under a weaker assumption. Denote by $\mu_t$ the  law of $m_t$ conditional on the past history $\{b_s,v_s,m_s\}_{s \leq t-1}$. Then the conclusions of Theorem~\ref{TH:exptreep} remain true if all $\mu_t$ satisfy the margin condition with respect to the same parameters $\alpha$ and $C_\mu$.

\subsection{Lower bound}
We now show that the family of rates---indexed by $\alpha$---is optimal up to logarithmic terms. As we shall see the upper bound is tight already in the case where the bid $\{m_t\}_t$ are i.i.d., independent of $\{v_t\}$. 

We first consider the case where $\alpha \in (0, 1)$. For any $\alpha$ in this interval, let $\mu_{\alpha}$ denote the distribution on $[0,1]$ with density $g_{\alpha}$ with respect to the Lebesgue measure, where $g_{\alpha}$ is defined by
$$
g_{\alpha}(x)=C_\alpha\Big[ \big(x-\frac12\big)^{\alpha-1}\1\big\{x\in (1/2,1/2+2\eps]\big\}+\big(x-\frac12-2\eps\big)^{\alpha-1}\1\big\{x\in (1/2+2\eps,1]\big\}\Big]\,,
$$
where $C_\alpha$ is an appropriate normalizing constant. (In what follows, $C_\alpha > 0$ is a constant that may change from line to line but depends on $\alpha$ only.)
See Figure~\ref{FIG:g} for a representation of this density. Observe that $\mu_{\alpha}$ satisfies the margin condition with parameter $\alpha>0$ around $v$. 

For $\alpha \geq 1$, define the distribution $\mu_\alpha$ to be the point mass at $1/2 + \eps$. This distribution also satisfies the margin condition with parameter $\alpha$.

\begin{figure}
\begin{tikzpicture}
  \draw[->] (-.5,0) -- (12,0) node[right] {$x$};
  \draw[->] (0,-.5) -- (0,4) node[above] {$g_\alpha(x)$};
  \draw[scale=1,domain=2.3:7.1,smooth,variable=\x,black] plot ({\x},{1/(\x-2)^.95});
\draw[scale=1,domain=7.3:12,smooth,variable=\x,black] plot ({\x},{1/(\x-7)^.95});
  \draw[-, scale=1, dashed] (2.1,-.3)node[below]{$\frac{1}{2}$}  -- (2.1,3.5) ;
  \draw[-, scale=1, dashed] (7.1,-.3)node[below]{$\frac{1}{2}+2\eps$}  -- (7.1,3.5) ;
    \draw[-, scale=1, red] (4.6,-.3)node[below]{$\frac{1}{2}+\eps$}  -- (4.6,3.5) ;
\end{tikzpicture}
\caption{Representation of the density $g_{.95}$ of bids $m_t$.}
\label{FIG:g}
\end{figure}
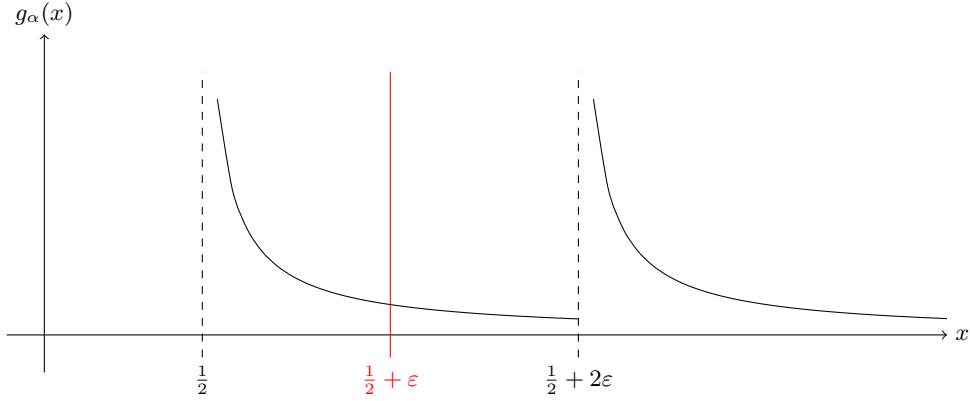

Let  $\nu$ denote the joint distribution of $(v_t, m_t)$ and denote by $\bar R_T(\nu)$ the pseudo-regret associated to a strategy when the expectation in~\eqref{EQ:PseudoRegret} is taken with respect to $\nu$. 

\begin{theorem}
\label{TH:sto_lb}
Fix $\alpha > 0$. Let $\nu=\bern(1/2)\otimes \mu_{\alpha}$ and $\nu'=\bern(1/2+2\eps)\otimes \mu_{\alpha}$, where $\eps=\frac 1 2 T^{-1/2}$. Then, for any strategy, it holds
$$
\bar R_T(\nu)\vee \bar R_T(\nu')\ge 
\left\{\begin{array}{ll}
C_\alpha T^{\frac{1-\alpha}{2}} & \text{if $\alpha < 1$} \\
C_\alpha \log  T & \text{if $\alpha \geq 1$}
\end{array}
\right.
$$
\end{theorem}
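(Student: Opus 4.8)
The plan is to use the classical two-point (Le Cam) method, reducing the regret lower bound to the difficulty of distinguishing $\nu$ from $\nu'$ from the bidder's observations. The feature to exploit is that the two environments share the \emph{same} bid law $\mu_\alpha$, so the public observations $m_t$ carry no information about which environment is in force; the bidder learns something only by observing $v_t$, which happens \emph{only on rounds he wins}. Since $\E[v_t]=\tfrac12$ under $\nu$ and $\tfrac12+2\eps$ under $\nu'$, the optimal bids are $v=\tfrac12$ and $v'=\tfrac12+2\eps$ respectively. First I would record, exactly as in the proof of Theorem~\ref{TH:sto1}, that under $\nu$ the pseudo-regret equals the total overpayment $\sum_t\E_\nu(m_t-\tfrac12)\1\{b_t>m_t\}$ (there is no mass below $\tfrac12$, so the benchmark utility is $0$), while under $\nu'$ it is the total cost of missed-or-overpaid auctions relative to the benchmark bid $v'$. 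Cutting at the midpoint $\tfrac12+\eps$ and using the form of $g_\alpha$ on $(\tfrac12,\tfrac12+2\eps]$, each round with $b_t\ge\tfrac12+\eps$ contributes at least $r_0:=\int_{1/2}^{1/2+\eps}(m-\tfrac12)\,\mathrm{d}\mu_\alpha(m)\ge c_\alpha\eps^{\alpha+1}$ to $\bar R_T(\nu)$, and each round with $b_t\le\tfrac12+\eps$ contributes at least $r_1:=\int_{1/2+\eps}^{1/2+2\eps}(\tfrac12+2\eps-m)\,\mathrm{d}\mu_\alpha(m)\ge c_\alpha\eps^{\alpha+1}$ to $\bar R_T(\nu')$. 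Writing $N=\#\{t:b_t>\tfrac12+\eps\}$, this gives $\bar R_T(\nu)\ge c_\alpha\eps^{\alpha+1}\,\E_\nu[N]$ and $\bar R_T(\nu')\ge c_\alpha\eps^{\alpha+1}\,\E_{\nu'}[T-N]$.

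Next I would combine the two bounds. Since $N$ is a fixed functional of the observed history, the standard Le Cam inequality gives $\E_\nu[N]+\E_{\nu'}[T-N]\ge T\big(1-\mathrm{TV}(P_\nu,P_{\nu'})\big)$, where $P_\nu,P_{\nu'}$ are the laws of the full transcript. By Pinsker and the divergence decomposition adapted to this feedback structure, $\mathrm{TV}(P_\nu,P_{\nu'})\le\sqrt{\tfrac12\KL(P_\nu\|P_{\nu'})}$ and $\KL(P_\nu\|P_{\nu'})=\E_\nu[W]\cdot\KL(\bern(\tfrac12)\|\bern(\tfrac12+2\eps))$, with $W=|\cW|$ the number of wins and $\KL(\bern(\tfrac12)\|\bern(\tfrac12+2\eps))\le C\eps^2$. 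Because $\eps^2\asymp 1/T$, the naive bound $\E_\nu[W]\le T$ only yields $\KL=O(1)$ and is not by itself enough to force $\mathrm{TV}$ below $1$; this is the main obstacle.

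To resolve it I would exploit that, in $\nu$, \emph{winning is itself costly}: the bidder can collect an observation of $v_t$ only by bidding above $m_t>\tfrac12$, which always overpays. Quantitatively, writing $U_t=(b_t-\tfrac12)_{\tinyplus}$, the per-round win probability is $p(U_t)=\mu_\alpha((\tfrac12,\tfrac12+U_t))$ and the per-round overpayment satisfies $r(U_t)\ge c_\alpha\,p(U_t)^{(\alpha+1)/\alpha}$, since $p(u)\asymp u^\alpha$ and $r(u)\asymp u^{\alpha+1}$. Applying Jensen twice---to the convex map $x\mapsto x^{(\alpha+1)/\alpha}$ and then across the $T$ rounds---turns this into $\bar R_T(\nu)\ge c_\alpha T^{-1/\alpha}\,\E_\nu[W]^{(\alpha+1)/\alpha}$. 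A short case analysis at the threshold $\E_\nu[W]=T/4$ then finishes the case $\alpha<1$: if $\E_\nu[W]\ge T/4$ this inequality already gives $\bar R_T(\nu)\ge c'_\alpha T\gg T^{(1-\alpha)/2}$, while if $\E_\nu[W]<T/4$ then $\KL\le\tfrac12$, so $\mathrm{TV}\le\tfrac12$ and the Le Cam chain yields $\bar R_T(\nu)\vee\bar R_T(\nu')\ge\tfrac14 c_\alpha\eps^{\alpha+1}T\asymp T^{(1-\alpha)/2}$.

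For $\alpha\ge1$, where $\mu_\alpha$ is the point mass at $\tfrac12+\eps$, the same scheme applies with the simplification that a win occurs exactly when $b_t>\tfrac12+\eps$, so $N=W$ and the overpayment is linear, $r(u)=\eps\,p(u)$; hence $\bar R_T(\nu)=\eps\,\E_\nu[W]$ and the identical threshold argument gives $\bar R_T(\nu)\vee\bar R_T(\nu')\gtrsim\eps T\asymp\sqrt T$, which is in particular at least $C_\alpha\log T$. I expect the convexity step, together with the coupling of ``information requires winning'' and ``winning costs regret'' (i.e.\ the control of $\E_\nu[W]$), to be the crux; the remaining pieces---the exact constants in $r_0,r_1$, the normalization $C_\alpha$, and the elementary estimate $\KL(\bern(\tfrac12)\|\bern(\tfrac12+2\eps))\le C\eps^2$---are routine.
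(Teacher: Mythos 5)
Your proof is correct, and while it rests on the same two-point construction and the same central observation as the paper---that the transcripts under $\nu$ and $\nu'$ differ only through the $\bern(v)$ draws revealed on winning rounds, so $\KL(P_\nu,P_{\nu'})\le C\eps^2\,\E_\nu[W]$---it executes the information--cost coupling by a genuinely different route. The paper lower-bounds each per-round sum $r_t(\nu)+r_t(\nu')$ by $C_\alpha\eps^{\alpha+1}$ times the sum of two test errors, applies the Sanov/Bretagnolle--Huber-type bound $\p_\nu(\psi_t=\nu')+\p_{\nu'}(\psi_t=\nu)\ge\tfrac12e^{-\KL}$, controls the KL through the surrogate $\mathsf{S}(\eps)$ (which also lower-bounds $\bar R_T(\nu)$), and closes with an explicit optimization over $\mathsf{S}\in[0,T]$. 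You instead use Pinsker plus the Le Cam inequality $\E_\nu[N]+\E_{\nu'}[T-N]\ge T(1-\mathrm{TV})$, and you compensate for Pinsker's need for $\KL<2$ via the convexity bound $r(u)\gtrsim p(u)^{(\alpha+1)/\alpha}$, hence $\bar R_T(\nu)\gtrsim T^{-1/\alpha}\E_\nu[W]^{(\alpha+1)/\alpha}$, followed by a dichotomy on whether $\E_\nu[W]$ exceeds $T/4$. This is a clean and arguably more transparent way to encode ``information requires winning and winning costs regret,'' though it yields the same rate; note that since $\eps^2T=O(1)$ here, even the naive bound $\E_\nu[W]\le T$ would suffice if one swapped Pinsker for the exponential inequality the paper uses. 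Two minor points: your $p(u)\asymp u^\alpha$, $r(u)\asymp u^{\alpha+1}$ hold literally only for $u\le 2\eps$, but the inequality you actually need, $r(u)\ge c_\alpha p(u)^{(\alpha+1)/\alpha}$, extends to all $u$ by the power-mean inequality applied to the two pieces of $g_\alpha$; and for $\alpha\ge1$ your argument delivers $\Omega(\sqrt T)$ for the stated $\eps=\tfrac12T^{-1/2}$, which implies the claimed $C_\alpha\log T$ but sits uneasily next to the $O(\log T)$ upper bound of Theorem~\ref{TH:sto2}---the resolution, which the paper's own proof tacitly adopts by switching to $\eps=\Theta(1)$ in that regime, is that the point mass at $1/2+\eps$ satisfies the margin condition around $v=1/2$ only with $C_\mu\ge\eps^{-\alpha}$, a constant that blows up when $\eps=\Theta(T^{-1/2})$.
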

\begin{proof}
We first consider the case where $\alpha < 1$. Recall from~\eqref{EQ:PseudoRegret} that the pseudo-regret is given by $\bar R_T=\E\sum_{t=1}^Tr_t$
where $r_t$ denotes the \emph{instantaneous regret}, defined by
$$
r_t(\nu)=\E_\nu(v-m_t)\mathds{1}\{v>m_t\} - \E_\nu (v-m_t)\mathds{1}\{b_t>m_t\}\,.
$$
Note first that under $\nu$ or $\nu'$ we can restrict our attention to strategies that bid $b_t\ge 1/2$.  Observe first that since $v=1/2$ under $\nu$, the definition of the pseudo-regret~\eqref{EQ:PseudoRegret} simplifies to
\begin{equation}
\label{EQ:pr:lbsto:1}
\bar R_T(\nu)=\sum_{t=1}^T\E_\nu(m_t-v)\1\{b_t >m_t\}=\E_\nu\sum_{t=1}^T\int_{1/2}^{b_t}(x-1/2)g_\alpha(x)dx%
\end{equation}
Moreover,
$$
\int_{1/2}^{b_t}(x-1/2)g_\alpha(x)dx \ge C_\alpha \big[\bar b_t^{\alpha+1}\1\{\bar b_t \le 2\eps\} + \big((2\eps)^{\alpha+1}+(\bar b_t -2\eps)^{\alpha+1}\big)\1\{ \bar b_t > 2\eps\}  \big]
$$
where  $\bar b_t=b_t-1/2\ge 0$.  Therefore
$$
\bar R_T(\nu)\ge C_\alpha \E_\nu S_{\alpha+1}\,,
$$
where 
$$
S_\alpha= \sum_{t=1}^T\bar b_t^{\alpha}\1\{\bar b_t \le 2\eps\} + \big((2\eps)^{\alpha}+(\bar b_t -2\eps)^{\alpha}\big)\1\{ \bar b_t > 2\eps\}\,.
$$
We will use the fact that $\E_\nu S_{\alpha+1} \ge (2\eps)^{\alpha+1}\mathsf{S}(\eps)$ and $\E_\nu S_\alpha \le (2\eps)^{\alpha}T+ \mathsf{S}(\eps)$, where
$$
\mathsf{S}(\eps)=\sum_{t=1}^T\p_{\nu}\{ \bar b_t > 2\eps\}\,.
$$

Next, for any strategy, define the associated test $\psi_t\in \{\nu , \nu'\}$ by $\psi=\nu$ if $\bar b_t\le \eps$ and $\psi=\nu'$ if $\bar b_t > \eps$. 
One the one hand, under $\nu$, the instantaneous regret $r_t$ satisfies
$$
r_t(\nu)\ge \E_\nu(m_t-1/2)\1\{1/2+\eps>m_t\}\1\{b_t>1/2+\eps\} \ge C_\alpha\eps^{\alpha+1}\p_\nu(\psi_t=\nu')\,.
$$
On the other hand, under $\nu'$,  the instantaneous regret $r_t$ satisfies
\begin{align*}
r_t(\nu')&\ge \E_{\nu'}\big[\1\{b_t<1/2+\eps\}(1/2+2\eps-m_t)\big(\1\{1/2+2\eps>m_t\}-\1\{b_t>m_t\}\big)\big]\\
&\ge \E_{\nu'}\big[\1\{b_t<1/2+\eps\}(1/2-2\eps-m_t)\big(\1\{1/2+2\eps>m_t\}-\1\{1/2+\eps>m_t\}\big)\big]\\
&= \E_{\nu'}\big[\1\{b_t<1/2+\eps\}(1/2-2\eps-m_t)\1\{1/2+\eps\le m_t <1/2+2\eps\big]\\
&= C_\alpha\p_{\nu'}(\psi_t=\nu)\int_\eps^{2\eps}x^\alpha(2\eps-x)dx\ge C_\alpha\p_{\nu'}(\psi_t=\nu) \eps^{\alpha+1}
\end{align*}
The last two displays yield
\begin{equation}
\label{EQ:pr:lbsto:2}
r_t(\nu) + r_t(\nu')\ge C_\alpha\eps^{\alpha+1}\big[\p_\nu(\psi_t=\nu')+\p_{\nu'}(\psi_t=\nu)\big]\,.
\end{equation}
It follows from Sanov's inequality (see, e.g., \cite{BubPerRig13}, Lemma~4) that
$$
\p_\nu(\psi_t=\nu')+\p_{\nu'}(\psi_t=\nu) \ge \frac{1}{2}\exp\big[-\KL(\nu^{\otimes t}, \nu'^{\otimes t})\big]\,.
$$
Moreover, since (i) $m_t$ has the same distribution under both $\nu$ and $\nu'$ and (ii), $v_t$ is observed only when $b_t\ge m_t$, we get
\begin{align*}
\KL(\nu^{\otimes t}, \nu'^{\otimes t})&=\E_{\nu}\sum_{s=1}^t\1(b_t \ge m_t)\KL\big(\bern(1/2), \bern(1/2+2\eps)\big)\\
&\le 4\eps^2\sum_{s=1}^t\p_{\nu}(m_t \le b_t)\\
&\le C_\alpha\eps^2 \E_\nu S_\alpha\\
&\le C_\alpha (2\eps)^{2+\alpha}T+\eps^2 \mathsf{S}(\eps)
\end{align*}
where we used the fact that $\eps \le (2\sqrt{2})^{-1}$ in the first inequality. Together with~\eqref{EQ:pr:lbsto:1} and~\eqref{EQ:pr:lbsto:2}, the above two displays yield
\begin{align*}
\bar R_T(\nu) + \bar R_T(\nu')&\ge C_\alpha \big(T\eps^{\alpha+1}\exp\big[-C_\alpha \big((2\eps)^{2+\alpha}T+\eps^2 \mathsf{S}(\eps)\big)\big]+(2\eps)^{\alpha+1}\mathsf{S}(\eps)\big)\\
&\ge C_\alpha \big(T\eps^{\alpha+1}\exp\big[-\eps^2 \mathsf{S}(\eps)\big]+(2\eps)^{\alpha+1}\mathsf{S}(\eps)\big)
\end{align*}
for $\eps$ such that $(2\eps)^{2+\alpha}T\le 1$. 
We obtain
$$
\bar R_T(\nu) + \bar R_T(\nu')\ge C_\alpha \inf_{S \in [0,T]}\big\{T\eps^{\alpha+1}\exp(-\eps^2 S) + \eps^{\alpha+1}S\big\} 
$$
The infimum is achieved when 
$$
S=\frac{\log(T\eps^2)}{\eps^2} \vee 0\,,
$$
so
\begin{equation*}
\bar R_T(\nu) + \bar R_T(\nu')\ge C_\alpha \left[\left( \eps^{\alpha -1} + \eps^{\alpha -1} \log(T\eps^2)\right) \vee T \eps^{\alpha + 1}\right]
\end{equation*}
for all $\eps \leq \frac 1 2 T^{-1/(2+\alpha)}$. Since $\alpha < 1$, we can choose $\eps = \frac{1}{2}T^{-1/2}$, in which case 
\begin{equation*}
\bar R_T(\nu) + \bar R_T(\nu')\ge C_\alpha T^{\frac{1-\alpha}{2}}\,,
\end{equation*}
as desired.

When $\alpha \geq 1$, we obtain the following analogue to~\eqref{EQ:pr:lbsto:1}:
\begin{equation*}
\bar R_t(\nu) = \sum_{t=1}^T\E_\nu\eps\1\{b_t > 1/2+\eps\} = \eps \mathsf{S'}(\eps)\,
\end{equation*}
where 
$$
\mathsf{S'}(\eps) = \sum_{t=1}^T \p_\nu \{b_t > 1/2 + \eps\}\,.
$$
The rest of the proof is the same apart from some small changes. Since $v_t$ is only observed when $b_t \geq 1/2 + \eps$, we obtain the bound
\begin{equation*}
\KL(\nu^{\otimes t}, \nu'^{\otimes t}) \leq C_\alpha \eps^2 \mathsf{S'}(\eps)\,.
\end{equation*}
This yields
\begin{equation*}
\bar R_T(\nu) + \bar R_T(\nu')\ge C_\alpha \inf_{S \in [0,T]}\big\{T\eps\exp(-\eps^2 S) + \eps S\big\}\,.
\end{equation*}
The infimum is attained at the same value of $S$, which implies 
\begin{equation*}
\bar R_T(\nu) + \bar R_T(\nu')\ge C_\alpha \left[\left( 1 + \log(T\eps^2)\right) \vee T \eps^2\right]
\end{equation*}
for all $\eps < 1/4$. Choosing $\eps = O(1)$ yields the claim.
\end{proof}

\section{The adversarial setup}
\label{SEC:adversarial}

In this section, unlike the stochastic case, we make no assumptions on the sequences $\{v_t\}_t$ and $\{m_t\}_t$, even allowing the seller and other bidders to coordinate their plays according to a non-stationary process. As in the stochastic case, we compare the performance of a sequence $\{b_t\}_t$ of bids generated by a data-driven strategy to the \emph{best fixed bid} in hindsight. 
As a consequence the (cumulative) \textsl{regret} $R_T$ of the bidder for not knowing his own sequence of values is defined as
\begin{equation}
\label{EQ:Regret}
R_T= \max_{b \in [0,1]} \sum_{t=1}^T  (v_t-m_t)\mathds{1}\{ b > m_t \}- \sum_{t=1}^T(v_t-m_t)\mathds{1}\{ b_t > m_t \}\,.
\end{equation}
As in the stochastic case, we will also consider the pseudo-regret $\bar R_T$, defined in (\ref{EQ:PseudoRegret}), which is easier to handle and will serve as an an illustration of the techniques used in our proofs.

Clearly, $\bar R_T \le \E[R_T]$ and it is well known that $\bar R_T = \E[R_T]$ when the adversary is oblivious  \cite{CesLug06, BubCes12}, that is, when it generates its sequence of moves independently of the past actions of the bidder. In the sequel, we study both oblivious and non-oblivious (a.k.a.\ adaptive) adversaries.

We henceforth consider a shifted version of the auction described above where the reward associated to bid $b$ at time $t$ is given by
\begin{equation*}
g(b, t) = (v_t - m_t) \indic{b > m_t} + m_t\,.
\end{equation*}
Shifting the reward of the game in this way does not affect the regret, but it has the convenient effect that the bidder's net utility at each round is positive. 

For notational convenience, assume hereafter that $m_t \in (0,1]$ and that $v_t \in [0,1]$. Precluding $m_t=0$ has no effect on the problem if we replace $m_t=0$ by an arbitrarily small value.
\newpage

\subsection{Oblivious adversaries}
 \begin{wrapfigure}{r}{0.45\textwidth}
 \vspace{-0.6cm}
    \begin{minipage}{0.45\textwidth}
\begin{algorithm}[H]
\begin{algorithmic}
\STATE{\textbf{Input:} $\eta \in (0, 1/2)$, $\cL=(0,1]$, $w_{(0,1]}=1$, $p_{(0,1]}=1$\,.}
\FOR{$t = 1, \dots, T$}
\STATE{Select $\ell \in \cL$ with probability $p_{\ell}$ and $b \sim \unif(\ell)$}
\STATE{Bid
$b_t =\left\{ \begin{array}{ll} 1 & \text{with probability $\eta$} \\
0 & \text{with probability $\eta$} \\
b & \text{with probability $1-2\eta$} \end{array} \right.$}
\STATE{Observe $m_t \in \bar\ell=(x,y]$} 
\STATE{$\bar \ell_l=(x,m_t]$, $\bar \ell_r= (m_t, y]$}
\STATE{$w_{\bar \ell_l}\leftarrow w_{\bar \ell}$, $w_{\bar \ell_r}\leftarrow w_{\bar \ell}$}
\STATE{$\cL\leftarrow (\cL\setminus \bar \ell)\cup  \bar \ell_l \cup \bar \ell_r$}
\FOR{$\ell \in \cL$}
\IF{$b_t>m_t$}
\STATE{Observe $v_t$}
\STATE{$\hat g(\ell) \leftarrow \frac{v_t\1\{m_t\preceq \ell\}}{\p_{ \cB_t} (b_t > m_t)}$}
\ELSE
\STATE{$\hat g(\ell) \leftarrow  \frac{m_t\indic{m_t \succeq \ell}}{1-\p_{ \cB_t} (b_t > m_t)}$}
\ENDIF
\STATE{$w_\ell \leftarrow w_{\ell} \exp(\eta \hat  g(\ell))$\,, \quad $p_{\ell} \leftarrow \frac{|\ell| w_{\ell, t}}{\sum_{\ell \in \cL} |\ell| w_{\ell}}$}
\ENDFOR
\ENDFOR
\end{algorithmic}
\caption{\exptree}
\label{alg:exptree}
\end{algorithm}
\end{minipage}
  \end{wrapfigure}
One popular strategy for adversarial partial-information problems of this kind is the celebrated \expthree{} algorithm~\cite{AueCesFre01}. However, \expthree{} and similar approaches are tailored to problems with a \emph{fixed} number of actions. In the auction setup, by contrast, the number of actions is \emph{a priori} unbounded, and even the number of actions up to equivalence grows with $T$.  Standard tools are therefore incapable of achieving sublinear regret in this regime. In Algorithm~\ref{alg:exptree}, we present a novel strategy for bandit games of this type that allows the number of actions to grow over time.

The algorithm maintains a sequence of nested partitions  $\cL_t, t \geq 1$ of $(0, 1]$ into $t$ intervals of the form $(x,y]$ for $0\le x<y\le 1$. We set $\cL_1=(0,1]$ and the refinement of the partition $\cL_t$ is done as follows. Let $\bar\ell=(x,y] \in \cL_t$ be the unique interval in $\cL_t$ such that $m_t \in \bar\ell$. Then $\bar\ell$ is \emph{split} into two subintervals $\bar\ell_l=(x, m_t]$ and $\bar\ell_r=(m_t,y]$: $\cL_{t+1}=(\cL_t\setminus\bar\ell)\cup\bar\ell_l\bar\cup \ell_r$. This procedure is illustrated in Figure \ref{Fig:Split}.

\begin{figure}[h!]
\begin{center}
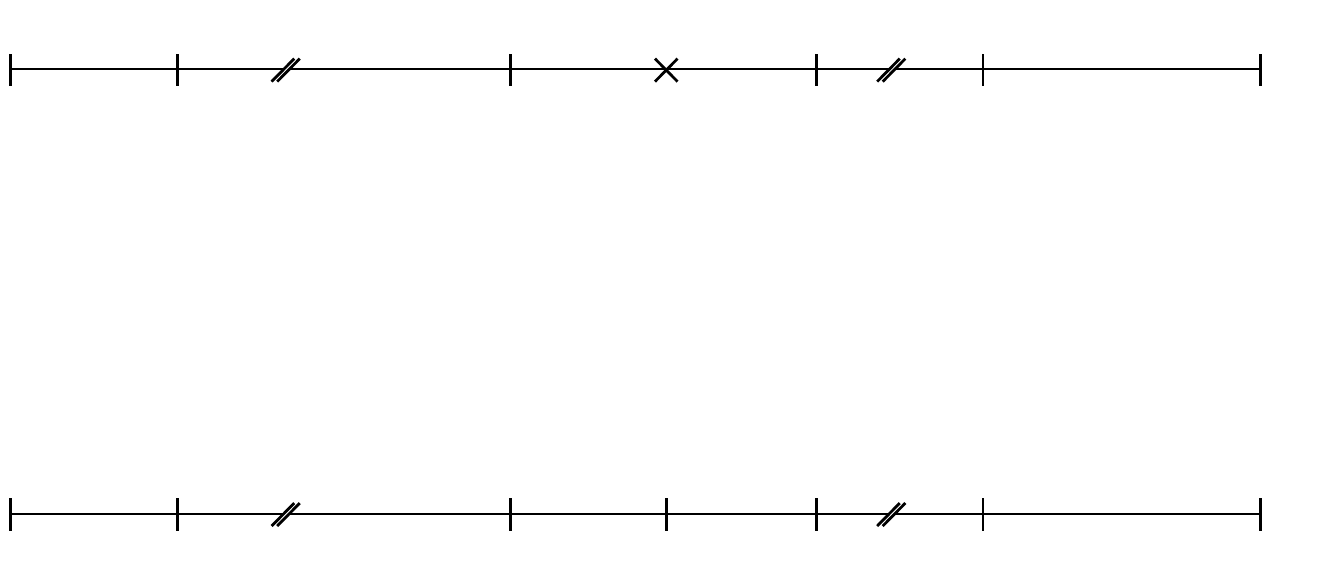
\end{center}
\caption{Illustration of the splitting procedure for constructing $\cL_{t+1}$ from $\cL_t$}
\label{Fig:Split}
\end{figure}

Each element $\ell \in \cL_t$ is assigned a probability $p_{\ell,t}$ defined in~\eqref{EQ:expweights} below and such that $p_{\ell,t}>0$ and $\sum_{\ell \in \cL_t} p_{\ell,t}=1$. At round $t$, the \exptree{} strategy prescribes to bid randomly as follows. With constant probability, bid $0$ or $1$. Otherwise, first draw $\ell \in \cL_t$ with probability $p_{\ell,t}$ and then draw a bid $b_t\sim \unif(\ell)$ uniformly over the interval $\ell$. We denote the resulting distribution of $b_t$ by $\cB_t$ and by $\p_{\cB_t}$ the associated probability. Note that $\cB_t$ is a mixture of uniform distributions that can be computed explicitly given $p_{\ell,t}, \ell \in \cL_t$:
\begin{equation}
\label{EQ:distBt}
\p_{\cB_t}(A)=(1-2\eta)\sum_{\ell \in \cL_t} p_{\ell,t}|A\cap \ell|\,, \qquad \forall A \in (0,1)\ \text{mesurable,}
\end{equation}
where here and in what follows, $|A|$ denotes the Lebesgue measure of $A \subset [0,1]$.
It remains only to specify the distribution $p_{\ell,t}, \ell \in \cL_t$. Intuitively, we hope to construct this distribution based on the intervals' past performance, but since the player only observes the value $v_t$ when $b_t > m_t$, we cannot evaluate the gain $g(b, t)$ of an arbitrary bid $b$ at round $t$. Instead, we compute an unbiased estimate $\hat g(b, t)$  of $g(b,t)$ given by
$$
\hat g(b, t)  =\frac{v_t \1\{b_t > m_t\}}{\p_{\cB_t} (b_t > m_t)}\1\{b > m_t\} + \frac{m_t\1\{b_t \le m_t\}}{1-\p_{\cB_t} (b_t > m_t)}\1\{b \leq m_t\}\,.
$$
It is not hard to check that $\E_{b_t\sim\cB_t}[\hat g(b, t)]=g(b,t)$. Moreover, this estimate is constant on each interval $\ell \in \cL_{t+1}$ and depends only on whether   $m_t\preceq \ell$ (i.e., $m_t \le x$ for all $x \in \ell$) or $m_t\succeq \ell$. As a result, overloading the notation,  we define the following estimate for the gain of a bid in the interval $\ell$:
\begin{equation}
\label{EQ:estgain1}
\hat g(\ell, t)=\frac{v_t \1\{b_t > m_t\}}{\p_{\cB_t} ( b_t> m_t)}\1\{m_t\preceq \ell\}  + \frac{m_t\1\{b_t \le m_t\}}{1-\p_{\cB_t} (b_t > m_t)}\1\{m_t\succeq \ell \}\,.
\end{equation}

With this estimate, we can compute $p_{\ell,t+1}, \ell \in \cL_{t+1}$ using exponential weights:
\begin{equation}
\label{EQ:expweights}
p_{\ell,{t+1}}=\frac{|\ell|w_{\ell,t+1}}{\sum_{\kappa \in \cL_{t+1}}|\kappa|w_{\kappa,t+1}}\,, \quad
w_{\ell,t}=\exp\Big(\eta \sum_{s=1}^{t-1}\hat g(\ell, s)\Big)\,, \ell \in \cL_{t+1}
\end{equation}
for some tuning parameter $\eta>0$ to be chosen carefully. The reweighing by the length $|\ell|$ of the interval $\ell$ in~\eqref{EQ:expweights} is the main novelty of our algorithm. 

\begin{theorem}
\label{TH:exptree}
Let $v_1, \dots, v_T \in [0,1]$ and $m_1, \dots, m_T \in [0,1]$ be arbitrary sequences. Let $\ell^{\circ} \in \cL_T$ denote the widest interval in the finest partition $\cL_T$ such that  $\argmax_{b \in [0, 1]} \sum_{t=1}^T \E(v_t - m_t)\indic{b > m_t} \cap \ell^\circ \neq \emptyset$ and let $\Delta^\circ=|\ell^\circ|$ denote its width. The strategy \exptree{} run with parameter
$\DS \eta = (1/2) \sqrt{\log(1/\Delta^{\circ})/T}\wedge(1/2)$
achieves the pseudo-regret bound
\begin{equation}
\bar R_T \leq 4\sqrt{T \log(1/\Delta^{\circ})}.
\end{equation}
\end{theorem}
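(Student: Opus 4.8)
The plan is to run the standard potential-function analysis for exponential weights, adapted to the growing partition $\{\cL_t\}$. The observation that makes everything work is that the length reweighting renders the potential $W_t := \sum_{\ell \in \cL_t} |\ell| w_{\ell,t}$ \emph{invariant under splitting}: when $\bar\ell$ is split into $\bar\ell_l,\bar\ell_r$ with the weight simply copied, $|\bar\ell_l| w_{\bar\ell} + |\bar\ell_r| w_{\bar\ell} = |\bar\ell| w_{\bar\ell}$, so refining the partition leaves $W_t$ unchanged. Consequently I can identify $W_t$ with the continuous potential $\int_0^1 \exp\big(\eta \sum_{s<t} \hat g_s(x)\big)\,dx$, where $\hat g_s(x)$ denotes the interval-constant estimate \eqref{EQ:estgain1} evaluated at a bid $x$, and the density of the non-exploration part of $\cB_t$ is exactly $q_t(x) := w_t(x)/W_t$. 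This reduces the problem to a clean one-dimensional exponential-weights computation over $[0,1]$.

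First I would bound $\log(W_{t+1}/W_t)$. The forced exploration is crucial here: bidding $1$ with probability $\eta$ forces $\p_{\cB_t}(b_t > m_t) \ge \eta$, and bidding $0$ with probability $\eta$ forces $\p_{\cB_t}(b_t \le m_t) \ge \eta$; since $v_t,m_t \le 1$, this gives $\eta \hat g_t(x) \le 1$. The elementary inequality $e^z \le 1+z+z^2$ for $z \le 1$ then yields $\log(W_{t+1}/W_t) = \log \E_{q_t}[e^{\eta \hat g_t}] \le \eta \E_{q_t}[\hat g_t] + \eta^2 \E_{q_t}[\hat g_t^2]$. Summing and telescoping controls $\log W_{T+1}$ (as $W_1=1$). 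For the matching lower bound I would restrict the integral defining $W_{T+1}$ to $\ell^\circ$ and apply Jensen's inequality (convexity of $\exp$), obtaining $W_{T+1} \ge \Delta^\circ \exp\big(\tfrac{\eta}{\Delta^\circ}\int_{\ell^\circ}\hat G\big)$, where $\hat G(x) = \sum_{t=1}^T \hat g_t(x)$; because $\ell^\circ \in \cL_T$, the estimate $\hat G$ is constant on $\ell^\circ$ except for the single round-$T$ term, so the averaged exponent differs from $\hat G(x^\circ)$ at an optimal fixed bid $x^\circ \in \ell^\circ$ by at most an additive $O(1)$ constant in expectation. Rearranging the two bounds isolates $\hat G(x^\circ) - \sum_t \E_{q_t}[\hat g_t] \le \eta^{-1}\log(1/\Delta^\circ) + \eta \sum_t \E_{q_t}[\hat g_t^2] + O(1)$.

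The technical heart, and what I expect to be the main obstacle, is controlling the second-moment term $\E_{q_t}[\hat g_t^2]$, since the estimates themselves are unbounded. The two pieces of $\hat g_t$ have disjoint support in $x$, so writing $P = \p_{\cB_t}(b_t > m_t)$ and taking the conditional expectation over $b_t \sim \cB_t$ gives $\E_{b_t} \E_{q_t}[\hat g_t^2] = \tfrac{v_t^2}{P}\int_{x > m_t} q_t + \tfrac{m_t^2}{1-P}\int_{x \le m_t} q_t$. Using $\p_{\cB_t}(b_t > m_t) = (1-2\eta)\int_{x>m_t} q_t + \eta$, the mass $\int_{x>m_t} q_t$ is at most $P/(1-2\eta)$, and symmetrically for the losing side; the inverse-probability factors then \emph{cancel}, leaving $\E_{b_t}\E_{q_t}[\hat g_t^2] \le 2/(1-2\eta) = O(1)$ per round. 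It is this cancellation, rather than any crude bound on the estimates, that keeps the variance proxy at $O(T)$, and getting it (together with the exploration/denominator accounting) right is the crux.

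Finally I would take the full expectation. Unbiasedness of $\hat g$ (established in the text, $\E_{b_t \sim \cB_t}[\hat g(b,t)] = g(b,t)$) gives $\E[\hat G(x^\circ)] = \max_b \sum_t \E\, g(b,t)$, the comparator in $\bar R_T$, while $\E \sum_t \E_{q_t}[\hat g_t] = \sum_t \int g(x,t)\, q_t(x)\,dx$ equals the algorithm's expected gain up to the $2\eta$-exploration discrepancy, which costs at most $O(\eta T)$ because the shifted gains lie in $[0,1]$. Collecting terms yields $\bar R_T \le \eta^{-1}\log(1/\Delta^\circ) + O(\eta T)$; with the prescribed choice $\eta = \tfrac12\sqrt{\log(1/\Delta^\circ)/T}\wedge \tfrac12$ the leading term is $2\sqrt{T\log(1/\Delta^\circ)}$ and the variance-plus-exploration term contributes a comparable multiple, so careful bookkeeping of the constants produces the claimed bound $\bar R_T \le 4\sqrt{T\log(1/\Delta^\circ)}$.
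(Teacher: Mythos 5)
Your proposal is correct and follows essentially the same route as the paper's proof: the length-reweighted potential $W_t=\sum_{\ell\in\cL_t}|\ell|w_{\ell,t}$ (invariant under splitting), the second-order bound $e^z\le 1+z+z^2$, the cancellation of the inverse-probability factors in the second-moment term (the paper phrases this as $\sum_{\ell}\p_{\cB_t}(b_t\in\ell)\hat g(\ell,t)^2\le\hat g(b_t,t)$ followed by $\E[\hat g(b_t,t)]=v_t+m_t\le 2$, which is your cancellation in pathwise form), and the lower bound on the potential via the interval $\ell^\circ$ containing the optimal bid. Your continuous-density rewriting and the explicit Jensen step on $\ell^\circ$ (with its $O(1)$ last-round correction) are only cosmetic refinements of the same argument, and the constant accounting matches.
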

\begin{proof}
Any choice $\eta \le 1/2$ guarantees that the probability distribution $\cB_t$ constructed above is valid. Moreover, when $\sqrt{\log(1/\Delta^{\circ})/T} > 1$, the claimed bound is vacuous, so we can assume that $\eta = (1/2)\sqrt{\log(1/\Delta^{\circ})/T}.$

Define $W_t=\sum_{\kappa \in \cL_t}|\kappa| w_{\kappa,t}$.
By extending the definition~\eqref{EQ:expweights} of $p_{\ell,t}$ to all $\ell \in \cL_{t+1}$, we can write
\begin{align*}
\log \frac{W_{t+1}}{W_{t}} & = \log \sum_{\ell \in \cL_{t+1}} \frac{|\ell|w_{\ell, t} \exp(\eta \hat g(\ell, t))}{W_t}  = \log \sum_{\ell \in \cL_{t+1}} p_{\ell, t} \exp(\eta\hat g(\ell, t)).
\end{align*}
By construction, $\eta\hat g(\ell, t) \leq 1$. Since $e^x \leq 1 + x + x^2$ for $x \leq 1$, this implies
\begin{align}
\label{EQ:pr:exptree1}
\log \frac{W_{t+1}}{W_{t}}  & \leq \log \sum_{\ell \in \cL_{t+1}} p_{\ell, t} (1 + \eta \hat g(\ell, t) + \eta^2 \hat g(\ell, t)^2)  \\
& = \log \big( 1+ \eta \sum_{\ell \in \cL_{t+1}} p_{\ell, t} \hat g(\ell, t) + \eta^2 \sum_{\ell \in \cL_{t+1}} p_{\ell, t} \hat g(\ell, t)^2\big) \nonumber \\
& \leq \eta \sum_{\ell \in \cL_{t+1}} p_{\ell, t} \hat g(\ell, t) + \eta^2 \sum_{\ell \in \cL_{t+1}} p_{\ell, t} \hat g(\ell, t)^2\,. \nonumber
\end{align}
It follows from~\eqref{EQ:distBt} that $\p_{\cB_t}(b_t \in \ell) = (1-2\eta) p_{\ell, t}$ for all $\ell \in \cL_{t}$. Moreover, for $\ell_l, \ell_r \in  \cL_{t+1}\setminus \cL_{t}$, there exists $\ell \in \cL_t$ such that $\ell=\ell_l\cup \ell_r$ and
$$
\p_{\cB_t}(b_t \in \ell_r)=\frac{|\ell_r|}{|\ell|} \p_{\cB_t}(b_t \in \ell)= (1-2\eta) {p_{\ell_r, t}}\,.
$$
 Of course, the same holds for $\ell_l$ so that
 $\p_{\cB_t}(b_t \in \ell) = (1-2\eta) p_{\ell, t}$ for all $\ell \in \cL_{t+1}$.
Moreover, it follows from~\eqref{EQ:estgain1} that
\begin{align*}
\sum_{\ell \in \cL_{t+1}} \p_{\cB_t}(b_t \in \ell) \hat g(\ell, t) &=\sum_{\stackrel{\ell \in \cL_{t+1}}{ m_t \preceq \ell} }\frac{\p_{\cB_t}(b_t \in \ell)}{\p_{\cB_t} ( b_t> m_t)} v_t\1(b_t>m_t)+\sum_{\stackrel{\ell \in \cL_{t+1}}{m_t \succeq \ell}} \frac{\p_{\cB_t}(b_t \in \ell)}{1-\p_{\cB_t} ( b_t> m_t)} m_t\1(b_t\le m_t)\\ & \le g(b_t,t)\,.
\end{align*}
Since $g(b_t, t) \leq 1$, we also have
\begin{align*}
\sum_{\ell \in \cL_{t+1}} \p_{\cB_t}(b_t \in \ell) \hat g(\ell, t)^2&=\sum_{\stackrel{\ell \in \cL_{t+1}}{ m_t \preceq \ell} }\frac{\p_{\cB_t}(b_t \in \ell)}{\p_{\cB_t} ( b_t> m_t)} v_t\1\{b_t>m_t\}\hat g(\ell, t)\\&\phantom{=}+\sum_{\stackrel{\ell \in \cL_{t+1}}{m_t \succeq \ell}} \frac{\p_{\cB_t}(b_t \in \ell)}{1-\p_{\cB_t} ( b_t> m_t)} m_t\1\{b_t\le m_t\}\hat g(\ell, t)\\
&\leq \frac{v_t^2\1(b_t>m_t)}{\p_{\cB_t} ( b_t> m_t)}+ \frac{m_t^2\1(b_t\le m_t)}{\p_{ \cB_t} ( b_t> m_t)}= g(b_t, t) \hat g(b_t,t) \leq \hat g(b_t, t)\,.
\end{align*}
These two inequalities yield respectively
\begin{align*}
\sum_{\ell \in \cL_{t+1}} p_{\ell, t} \hat g(\ell, t) = \frac{1}{1-2\eta} \sum_{\ell \in \cL_{t+1}} \p_{\cB_t}(b_t \in \ell) \hat g(\ell, t) \le \frac{1}{1-2\eta}g(b_t, t)\,,
\end{align*}
and
$$
\sum_{\ell \in \cL_{t+1}} p_{\ell, t} \hat g(\ell, t)^2 = \frac{1}{1-2\eta} \sum_{\ell \in \cL_{t+1}} \p_{\cB_t}(b_t \in \ell) \hat g(\ell, t)^2 \leq \frac{1}{1-2\eta} \hat g(b_t, t)\,,
$$

Combining the above two displays with~\eqref{EQ:pr:exptree1} yields
\begin{equation*}
\log \frac{W_{t+1}}{W_{t}} \leq \frac{\eta}{1-2\eta} g(b_t, t) + \frac{\eta^2}{1-2\eta}  \hat g(b_t, t)\,.
\end{equation*}
It follows from (\ref{EQ:estgain1}) that $\E[\hat g(b_t, t)] = v_t + m_t \leq 2$, hence
\begin{equation*}
\E\big[\log \frac{W_{t+1}}{W_{t}}\big] \leq \frac{\eta}{1-2\eta} \E g(b_t, t) + \frac{2 \eta^2}{1-2\eta}\,.
\end{equation*}

Let $G(b) = \sum_{t=1}^T \hat g(b, t)$ and $\bar G = \sum_{t=1}^T g(b_t, t)$. Summing on $T$, we obtain
\begin{align*}
\E\big[\log \frac{W_T}{W_0}\big] &\leq \frac{\eta}{1-2\eta} \E \bar G + \frac{2 \eta^2T}{1-2\eta}\,.
\end{align*}

Let $b^{\circ} \in \argmax_{b \in [0, 1]} \sum_{t=1}^T \E(v_t - m_t)\indic{b > m_t}$, and suppose $b^{\circ} \in \ell^{\circ} \in \cL_T$. We can bound $W_T$ by writing
$$
\E[\log W_T] = \E\big[\log \sum_{\ell \in \cL_T} |\ell|\exp\big(\eta \sum_{t=1}^T\hat g(\ell, t)\big)\big] \geq \E\big[\log |\ell^\circ|\exp\big(\eta \sum_{t=1}^T\hat g(b^{\circ}, t)\big)\big] = \log \Delta^{\circ} + \eta G(b^{\circ}).
$$

Rearranging and noting that $W_0 = 1$, we obtain
\begin{equation*}
(1- 2\eta) \max_{b \in [0, 1]} \E G(b) -  \E \bar G \leq 2 \eta T + \frac{(1-2\eta)\log(1/\Delta^{\circ})}{\eta} \leq 2\eta T + \frac{\log(1/\Delta^{\circ})}{\eta}.
\end{equation*}

Finally, since $G(b) \leq T$, we obtain
\begin{equation*}
\bar R_T \leq 4 \eta T + \frac{\log(1/\Delta^{\circ})}{\eta}.
\end{equation*}

Plugging in the given value of $\eta$ yields the claim.
\end{proof}

 \begin{wrapfigure}{R}{0.6\textwidth}
    \begin{minipage}{0.6\textwidth}
\begin{algorithm}[H]
\begin{algorithmic}
\STATE{\textbf{Input:} $\eta \in (0, 1/8)$, $\gamma \in (0, 1/4)$, $\beta \in (0, 1)$, $\cL=(0,1]$, $w_{(0,1]}=1$, $p_{(0,1]}=1$\,.}
\FOR{$t = 1, \dots, T$}
\STATE{Select $\ell \in \cL$ with probability $p_{\ell}$ and $b \sim \unif(\ell)$}
\STATE{Bid
$b_t =\left\{ \begin{array}{ll} 1 & \text{with probability $\gamma$} \\
0 & \text{with probability $\gamma$} \\
b & \text{with probability $1-2\gamma$} \end{array} \right.$}
\STATE{Observe $m_t \in \bar\ell=(x,y]$ and define $\bar \ell_l=(x,m_t]$, $\bar \ell_r= (m_t, y]$}
\STATE{$w_{\bar \ell_l}\leftarrow w_{\bar \ell}$, $w_{\bar \ell_r}\leftarrow w_{\bar \ell}$}
\STATE{$\cL\leftarrow (\cL\setminus \bar \ell)\cup  \bar \ell_l \cup \bar \ell_r$}
\FOR{$\ell \in \cL$}
\IF{$b_t>m_t$}
\STATE{Observe $v_t$}
\STATE{$\tilde g(\ell) \leftarrow \frac{v_t+\beta}{\p_{ \cB_t} (b_t > m_t)}\1\{m_t\preceq \ell\} + \frac{\beta}{1 - \p_{ \cB_t} (b_t > m_t)} \1\{m_t \succeq \ell\}$}
\ELSE
\STATE{$\tilde g(\ell) \leftarrow  \frac{\beta}{\p_{ \cB_t}(b_t > m_t)}\1\{m_t\preceq \ell\} + \frac{m_t+\beta}{1-\p_{ \cB_t} (b_t > m_t)}\indic{m_t \succeq \ell}$}
\ENDIF
\STATE{$w_\ell \leftarrow w_{\ell} \exp(\eta \tilde g(\ell))$\,, \quad $p_{\ell} \leftarrow \frac{|\ell| w_{\ell, t}}{\sum_{\ell \in \cL} |\ell| w_{\ell}}$}
\ENDFOR
\ENDFOR
\end{algorithmic}
\caption{\exptreep}
\label{alg:exptreep}
\end{algorithm}
\end{minipage}
  \end{wrapfigure}
Note that choosing a value of $\eta$ appears to require knowledge of $\Delta^{\circ}$ and $T$ in advance. However, the so-called ``generic doubling trick'' allows the bidder to learn these values adaptively at the price of a constant factor~\cite{HazKal10}. In the partial information case, this change also requires replacing $\Delta^\circ$, the width of an interval containing an optimal bid, by $\Delta$, with width of the narrowest interval.

We initialize two bounds, $B_T=1$ and $B_\Delta=1$, and run \exptree{} with parameter $\eta = (1/2)\sqrt{B_\Delta/B_T}\wedge\frac 1 2$ until either $t \leq B_T$ or $\log 1/\Delta \leq B_\Delta$ fails to hold. When one of these bounds is breached, we double the bound and restart the algorithm, maintaining the partition $\cL_t$ but setting $w_\ell = 1$ for all $\ell \in \cL_t$. This modified strategy yields the following theorem.
\begin{theorem}
\label{TH:adv02}
The strategy \exptree{} run with the above doubling procedure yields an expected regret bound
\begin{equation*}
R_T \leq 48 \sqrt{2 T \log (1/\Delta)}\,.
\end{equation*}
\end{theorem}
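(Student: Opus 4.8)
The plan is to analyze the doubling procedure epoch by epoch, apply Theorem~\ref{TH:exptree} on each epoch, and then sum the per-epoch regrets, the point being that the doubling schedule makes these regrets grow geometrically so that the whole sum is controlled by its last term and no logarithmic factor survives. First I would fix the epoch decomposition: the procedure splits the $T$ rounds into consecutive epochs $E_1,\dots,E_K$, each terminated by a breach of one of the two bounds, and I denote by $(B_T^{(j)},B_\Delta^{(j)})$ the values of the bounds in force during $E_j$. The key structural fact is that each breach doubles \emph{exactly one} of the two bounds, so the product $B_T^{(j)}B_\Delta^{(j)}$ doubles from one epoch to the next.

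Next I would establish the per-epoch bound. Because restarting resets the weights to $w_\ell=1$ while \emph{keeping} the partition, at the start of each epoch the normalizer is $W_0=\sum_\ell |\ell|=1$ (the intervals tile $(0,1]$), so the hypotheses of Theorem~\ref{TH:exptree} hold verbatim on $E_j$ even though the starting partition is no longer the trivial one. Within $E_j$ the narrowest interval has width at least $\Delta$, and since the narrowest interval is no wider than the one containing the epoch-optimal bid we have $\log(1/\Delta^{\circ})\le\log(1/\Delta)\le B_\Delta^{(j)}$ by the stopping rule. Feeding the epoch length $\tau_j\le B_T^{(j)}$ and this bound into the inequality $\bar R\le 4\eta T+\log(1/\Delta^{\circ})/\eta$ from the proof of Theorem~\ref{TH:exptree}, with $\eta=(1/2)\sqrt{B_\Delta^{(j)}/B_T^{(j)}}$, gives $\bar R_{E_j}\le 4\sqrt{B_T^{(j)}B_\Delta^{(j)}}$; in the degenerate early epochs where $B_\Delta^{(j)}>B_T^{(j)}$ and $\eta$ saturates at $1/2$, the trivial bound $\bar R_{E_j}\le \tau_j\le B_T^{(j)}\le\sqrt{B_T^{(j)}B_\Delta^{(j)}}$ gives the same estimate.

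Then I would pass to the expected regret and combine. For an oblivious adversary the sequences $\{v_t\},\{m_t\}$ are fixed, so $\bar R=\E R$ on each epoch, and the global regret decomposes as $R_T\le\sum_{j=1}^K R_{E_j}$, where $R_{E_j}$ measures regret against the best fixed bid \emph{restricted} to $E_j$ (each epoch optimum is at least as good on $E_j$ as the global optimum). Taking expectations and using $B_T^{(j)}B_\Delta^{(j)}=2^{\,j-1}$, so that $\sqrt{B_T^{(j)}B_\Delta^{(j)}}=(\sqrt2)^{\,j-1}$, turns the sum into a geometric series:
\begin{equation*}
\E R_T\le 4\sum_{j=1}^{K}(\sqrt2)^{\,j-1}\le \frac{4\sqrt2}{\sqrt2-1}\,(\sqrt2)^{\,K-1}=\frac{4\sqrt2}{\sqrt2-1}\sqrt{B_T^{(K)}B_\Delta^{(K)}}\,.
\end{equation*}
It then remains to control the final bounds: the horizon forces $B_T^{(K)}\le 2T$ (the $B_T$-breaching epochs consume $1+2+\cdots$ rounds, which cannot exceed $T$) and the stopping rule forces $B_\Delta^{(K)}\le 2\log(1/\Delta)$, whence $\sqrt{B_T^{(K)}B_\Delta^{(K)}}\le 2\sqrt{T\log(1/\Delta)}$ and the claimed bound follows with room to spare in the constant.

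The main obstacle I expect is the bookkeeping that makes the geometric-sum step rigorous rather than just plausible: one must verify that Theorem~\ref{TH:exptree} applies unchanged after each restart (which hinges precisely on the reset $w_\ell=1$ restoring $W_0=1$ and on the maintained partition only shrinking $\Delta$), and that \emph{every} breach advances the exponent of $B_T B_\Delta$ by exactly one, so that the $\sim\log T+\log\log(1/\Delta)$ epochs do not produce an extra logarithmic factor. Edge cases — simultaneous breaches of both bounds and the early epochs where the base bound is vacuous — have to be absorbed into the same geometric series, which is exactly why the trivial fallback $\bar R_{E_j}\le\sqrt{B_T^{(j)}B_\Delta^{(j)}}$ is worth recording.
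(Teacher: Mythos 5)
Your proposal is correct and follows essentially the same route as the paper: decompose the run into epochs on which $(B_T,B_\Delta)$ are constant, apply the per-epoch bound $4\eta B_T + B_\Delta/\eta \le 4\sqrt{B_T B_\Delta}$ extracted from the proof of Theorem~\ref{TH:exptree} (noting that restricting the benchmark bid to each epoch only increases the total), and sum geometrically using $B_T^\star \le 2T$ and $B_\Delta^\star \le 2\log(1/\Delta)$. The only cosmetic difference is that you organize the summation as a single geometric series in the product $B_T^{(j)}B_\Delta^{(j)}$, which doubles at each breach, whereas the paper bounds the sum along the lattice path of breaches by the full double sum $\sum_{i=0}^{\ell}\sum_{j=0}^{m}4\sqrt{2^i 2^j}$; both yield the stated constant with room to spare.
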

\begin{proof}
Divide the algorithm into stages on which $B_T$ and $B_\Delta$ are constant, and denote by $B_T^{\star}$ and $B_\Delta^{\star}$ the values of $B_T$ and $B_\Delta$ when the algorithm terminates. The proof of Theorem~\ref{TH:exptree} implies that the expected regret incurred during any given stage is at most
\begin{equation*}
4\eta T + \frac{\log(1/\Delta)}{\eta} \leq 4\eta B_T + \frac{B_\Delta}{\eta} \leq 4 \sqrt{T \log(1/\Delta)}\,.
\end{equation*}
It remains to sum these regrets over each stage, since the actual expected regret (which requires a fixed bid across all stages) can only be smaller. 

Suppose that the algorithm lasted a total of $\ell+ m+1$ stages, $\ell$ of which were ended because the bound $t \leq B_T$ was violated and $m$ of which were ended because the bound $\log(1/\Delta) \leq B_\Delta$ was violated. The total regret across all $\ell + m + 1$ stages is bounded by
\begin{equation*}
\sum_{i = 0}^{\ell} \sum_{j = 0}^{m} 4\sqrt{2^{i} \cdot 2^{j}} = \frac{4}{(\sqrt 2 - 1)^2} (2^{(i+1)/2} - 1)(2^{(j+1)/2}-1) \leq 48 \sqrt{B_T^* B_\Delta^\star}\,.
\end{equation*}
Moreover, when the algorithm terminates, we have the bounds $B_T^{\star} \leq T$ and $B_\Delta^{\star} \leq 2 \log(1/\Delta)$. The result follows.
\end{proof}

\subsection{Adaptive adversaries}

Theorem~\ref{TH:exptree} establishes an upper bound on the pseudo-regret against any adversary. Moreover, when the adversary is oblivious, the same bound holds for the expected regret. When the adversary is adaptive, however, achieving a bound on the expected regret requires a slightly modified algorithm, Algorithm~\ref{alg:exptreep}.

Algorithm~\ref{alg:exptreep} differs from Algorithm~\ref{alg:exptree} chiefly in the method of calculating the estimated gain in~\eqref{EQ:estgain1}. In place of $\hat g(\ell, t)$, \exptreep{} employs a \emph{biased} estimate $\tilde g(\ell, t)$ defined by
\begin{equation}
\label{EQ:estgain2}
\tilde g(\ell, t)=\frac{v_t \1\{b_t > m_t\}+\beta}{\p_{ \cB_t} ( b> m_t)} \1\{m_t\preceq \ell\} + \frac{m_t\1\{b_t \le m_t\}+\beta}{1-\p_{ \cB_t} (b_t > m_t)}\1\{m_t\succeq \ell \}\,.
\end{equation}

The following theorem holds.

\begin{theorem}
\label{TH:exptreep}
Let $v_1, \dots, v_T \in [0,1]$ and $m_1, \dots, m_T \in [0,1]$ be arbitrary sequences.  Let $\ell^{\circ} \in \cL_T$ denote the narrowest interval in the finest partition $\cL_T$ and let $\Delta^\circ=|\ell^\circ|$ denote its width. The strategy \exptreep{} run with parameters
\begin{equation*}
\eta = \sqrt{\frac{\log(1/\Delta^\circ)}{8T}}\wedge\frac 1 8, \quad \gamma = 2 \eta, \quad\text{and } \beta = \sqrt{\frac{\log T}{2 T}}
\end{equation*}
yields
$$
R_T \leq 2\sqrt{8T \log(1/\Delta^\circ)} + 3\sqrt{2T \log T}\log(1/\delta)\,,
$$
with probability at least $1-\delta$. Moreover,
$$
\E[R_T]\le 2\sqrt{8T \log(1/\Delta^\circ)} + 3\sqrt{2T \log T}\,.
$$
\end{theorem}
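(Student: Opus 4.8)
The plan is to transplant the potential-function argument of Theorem~\ref{TH:exptree} onto the \emph{biased} estimator $\tilde g(\ell,t)$ of~\eqref{EQ:estgain2}, and then to convert the resulting bound on estimated gains into one on true gains through an exponential-supermartingale argument, the bias $\beta$ being designed precisely to make this conversion possible. Throughout, abbreviate $g(b,t)=v_t\indic{b>m_t}+m_t\indic{b\le m_t}$ for the shifted reward, write $G^{\mathrm{true}}(b)=\sum_t g(b,t)$, $\tilde G(b)=\sum_t\tilde g(b,t)$ and $\bar G=\sum_t g(b_t,t)$, let $\cF_{t-1}$ denote the history before round $t$, and set $p_t=\p_{\cB_t}(b_t>m_t)$ and $U_t(b)=\indic{b>m_t}/p_t+\indic{b\le m_t}/(1-p_t)$. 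The two facts driving everything are $\tilde g(b,t)=\hat g(b,t)+\beta U_t(b)$ together with $\E[\hat g(b,t)\mid\cF_{t-1}]=g(b,t)$ and $\E[\hat g(b,t)^2\mid\cF_{t-1}]\le U_t(b)$. The $\gamma$-exploration forces $p_t\ge\gamma$ and $1-p_t\ge\gamma$, so $\tilde g(\ell,t)\le(1+\beta)/\gamma$; with $\gamma=2\eta$ and $\beta\le1$ this gives $\eta\tilde g(\ell,t)\le1$, exactly the hypothesis needed to reuse $e^x\le1+x+x^2$.

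First I would rerun the telescoping computation for $W_t=\sum_{\kappa\in\cL_t}|\kappa|w_{\kappa,t}$ verbatim, but now lower bound $\log|\ell|\ge\log\Delta^\circ$ by the width of the \emph{narrowest} cell of $\cL_T$; this is the right move because against an adaptive adversary the optimal fixed bid $b^\star$ is random and may land in any cell. The computation yields, simultaneously for every $b$, the pathwise inequality $\tilde G(b)\le\eta^{-1}\log(1/\Delta^\circ)+\sum_t\langle p_t,\tilde g_t\rangle+\eta\sum_t\langle p_t,\tilde g_t^2\rangle$. Using $\p_{\cB_t}(b_t\in\ell)=(1-2\gamma)p_{\ell,t}$ exactly as in Theorem~\ref{TH:exptree}, the first-order term collapses pathwise to $\langle p_t,\tilde g_t\rangle\le(1-2\gamma)^{-1}(g(b_t,t)+2\beta)$, and the same collapse together with $\tilde g\le(1+\beta)/\gamma$ gives $\langle p_t,\tilde g_t^2\rangle\le(1-2\gamma)^{-1}(1+\beta)\big(\tilde g(b_t,t)+\beta/\gamma\big)$. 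Because $\eta/\gamma=\tfrac12$, every term carrying a $\beta/\gamma$ reduces to $O(\beta T)$, and after summing one is left with $\tilde G(b)\le\bar G+\eta^{-1}\log(1/\Delta^\circ)+\tfrac{\eta(1+\beta)}{1-2\gamma}\sum_t\hat g(b_t,t)+O(\eta T+\beta T)$. The only still-random quantity here is $\sum_t\hat g(b_t,t)$.

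The heart of the proof is two exponential-supermartingale bounds. For a fixed comparator $b$ consider $M_t=\exp\big(\lambda\sum_{s\le t}(g(b,s)-\hat g(b,s))-\tfrac{\lambda^2}{2}\sum_{s\le t}U_s(b)\big)$; since $\hat g\ge0$, the inequality $e^{-x}\le1-x+x^2/2$ and $\E[\hat g^2\mid\cF_{t-1}]\le U_t(b)$ make $M_t$ a supermartingale, so Markov gives $G^{\mathrm{true}}(b)\le\sum_t\hat g(b,t)+\tfrac{\lambda}{2}\sum_tU_t(b)+\lambda^{-1}\log(1/\delta)$ with probability $1-\delta$. Choosing $\lambda=2\beta$ makes the quadratic-variation term $\beta\sum_tU_t(b)$ \emph{exactly} the bias already carried by $\tilde g$, so the otherwise uncontrollable factor $\sum_tU_t(b)\asymp T/\gamma$ is absorbed and one obtains $G^{\mathrm{true}}(b)\le\tilde G(b)+\log(1/\delta)/(2\beta)$. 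This cancellation is the whole purpose of $\beta$ and is the main obstacle: a plain Azuma or Bernstein bound on the same martingale pays a factor $1/\gamma$ and is vacuous. Since $b\mapsto\tilde G(b)$ is piecewise constant with breakpoints among $\{m_t\}$, it takes at most $T$ values, so a union bound lets the estimate hold at the random maximizer $b^\star$ at the cost of $\log(T/\delta)$ in place of $\log(1/\delta)$. A second, analogous supermartingale applied to $\sum_t\hat g(b_t,t)$, whose conditional mean is $\sum_t(v_t+m_t)\le2T$, tames the leftover term of the previous paragraph, contributing only $O(\eta T)+O(\log(1/\delta))$.

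Finally I would combine $R_T=G^{\mathrm{true}}(b^\star)-\bar G$ with the comparator concentration and the pathwise bound on $\tilde G(b^\star)$: the realized reward $\bar G$ cancels, leaving $R_T\le\eta^{-1}\log(1/\Delta^\circ)+(2\beta)^{-1}\log(T/\delta)+O(\eta T+\beta T+\log(1/\delta))$. Substituting $\eta=\sqrt{\log(1/\Delta^\circ)/(8T)}$, $\gamma=2\eta$ and $\beta=\sqrt{\log T/(2T)}$ balances $\eta^{-1}\log(1/\Delta^\circ)$ against $\eta T$ (yielding the leading $2\sqrt{8T\log(1/\Delta^\circ)}$) and $(2\beta)^{-1}\log(T/\delta)$ against $\beta T$ (yielding the $\sqrt{2T\log T}$-scale term that carries the $\log(1/\delta)$ dependence), which is the claimed high-probability bound. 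The in-expectation statement follows by integrating the tail: since the parameters do not depend on $\delta$, the bound reads $\p(R_T>A+B\log(1/\delta))\le\delta$ with $A=2\sqrt{8T\log(1/\Delta^\circ)}$ and $B=3\sqrt{2T\log T}$, whence $\E[R_T]\le A+\int_0^\infty\p(R_T-A>Bu)\,B\,\ud u\le A+B$.
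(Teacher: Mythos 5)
Your proposal follows essentially the same route as the paper: the interval-weighted exponential-weights potential applied to the biased estimator $\tilde g$, a concentration step (the paper's Lemma~\ref{LEM1}) in which the bias $\beta$ is chosen exactly to cancel the quadratic variation of an exponential supermartingale of $G(b)-\tilde G(b)$, a union bound over the at most $T$ values of the piecewise-constant cumulative gain, and integration of the tail for the bound in expectation. The only deviations are cosmetic: the paper absorbs the second-order term via $\sum_{\ell} p_{\ell,t}\,\tilde g(\ell,t)^2 \le (1-2\gamma)^{-1}(1+\beta)(\tilde g(1,t)+\tilde g(0,t))$ into $2\max_b \tilde G(b)$ on the left-hand side rather than invoking a second supermartingale, and your inequality $e^{-x}\le 1-x+x^2/2$ (valid only for $x\ge 0$) should be replaced by $e^{x}\le 1+x+x^2$ for $x\le 1$ as in the paper, since the increment $g-\hat g$ is not signed --- this costs only a constant in the choice of $\lambda$.
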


Denote by $G(b) = \sum_{t=1}^T G(b, t)$ and $\tilde G(b) = \sum_{t=1}^T \tilde g(b, t)$ the cumulative true and estimated gains for a bet $b$. Before proving Theorem~\ref{TH:exptreep}, we establish the following lemma, which shows that $\tilde G$ can be viewed as an upper bound on $G$. 
\begin{lemma}\label{LEM1}
With probability at least $1- \delta$ the bound
$
G(b, T) \leq \tilde G(b, T)  + \frac{\log{T \delta^{-1}}}{\beta}
$
holds for all $b \in [0, 1]$.
\end{lemma}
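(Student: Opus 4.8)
The plan is to adapt the high-probability analysis of \textsc{Exp3.P} (an exponential-supermartingale argument) to the feedback structure here. The first step is to reduce the uniform-in-$b$ statement to finitely many bids. For each fixed $t$, both $g(b,t)$ and the biased estimate $\tilde g(b,t)$ from~\eqref{EQ:estgain2} depend on $b$ only through the indicators $\1\{b>m_t\}$ and $\1\{b\le m_t\}$; as functions of $b$ they are therefore constant on each interval of $\cL_{T+1}$, the partition generated by the breakpoints $m_1,\dots,m_T$. Since $|\cL_{T+1}|\le T+1$, it suffices to establish the bound for one representative bid per interval and then take a union bound over these at most $T+1$ events, which is what produces the $\log(T\delta^{-1})$ factor.

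Next I would fix a bid $b$ and write $p_t:=\p_{\cB_t}(b_t>m_t)$ and $c_t(b):=\tfrac1{p_t}\1\{b>m_t\}+\tfrac1{1-p_t}\1\{b\le m_t\}$, so that $\tilde g(b,t)=\hat g(b,t)+\beta\,c_t(b)$, where $\hat g(b,t)$ is the \emph{unbiased} estimate from~\eqref{EQ:estgain1} evaluated at the point $b$. Conditioning is on the filtration $\cF_{t-1}$ generated by the bidder's internal randomization; note that $m_t$, $v_t$, $p_t$ and the mixture $\cB_t$ are all $\cF_{t-1}$-measurable (even against an adaptive adversary, $m_t$ and $v_t$ are committed before $b_t$ is drawn), so the only randomness left in $\hat g(b,t)$ is $b_t\sim\cB_t$. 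Three elementary facts then need to be recorded: $\E[\hat g(b,t)\mid\cF_{t-1}]=g(b,t)$; $0\le\hat g(b,t)\le c_t(b)$; and, crucially, $\E[\hat g(b,t)^2\mid\cF_{t-1}]\le c_t(b)$, the last because $v_t,m_t\in[0,1]$ forces, e.g., $v_t^2/p_t\le 1/p_t$.

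The heart of the proof is a one-line moment-generating-function bound. Using $e^{-z}\le 1-z+z^2$ for $z\ge0$ with $z=\beta\hat g(b,t)$, and then $1+x\le e^x$, I obtain
\[
\E\big[e^{-\beta\hat g(b,t)}\mid\cF_{t-1}\big]\le 1-\beta g(b,t)+\beta^2 c_t(b)\le \exp\big(-\beta g(b,t)+\beta^2 c_t(b)\big),
\]
whence $\E\big[\exp(\beta(g(b,t)-\tilde g(b,t)))\mid\cF_{t-1}\big]\le 1$, since $\beta(g-\tilde g)=\beta g-\beta\hat g-\beta^2 c_t(b)$. Consequently $M_t:=\exp\big(\beta\sum_{s\le t}(g(b,s)-\tilde g(b,s))\big)$ is a supermartingale with $M_0=1$, so $\E M_T\le1$, and Markov's inequality gives $\p\{\sum_t(g(b,t)-\tilde g(b,t))>\beta^{-1}\log(1/\delta')\}\le\delta'$. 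Taking $\delta'=\delta/(T+1)$ and applying the union bound over the intervals of $\cL_{T+1}$ from the first step yields the claim.

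I expect the main obstacle to be the very feature that motivates the biased estimator: $\hat g(b,t)$ is heavy-tailed, of size $\Theta(1/p_t)$, so a direct Azuma--Hoeffding control of $\sum_t(g(b,t)-\hat g(b,t))$ is unavailable. The resolution is that the deterministic correction $\beta c_t(b)$ built into $\tilde g$ is tuned to cancel exactly the second-order term $\beta^2 c_t(b)$ arising from the variance proxy $\E[\hat g(b,t)^2\mid\cF_{t-1}]\le c_t(b)$; this cancellation is what makes $M_t$ a genuine supermartingale, and it notably requires no upper restriction on $\beta$. A secondary point to treat carefully is that the partition $\cL_{T+1}$---and hence the finite collection of bids over which the union bound runs---is itself random and data-dependent, so one must verify that the piecewise-constant reduction holds pathwise before invoking the union bound.
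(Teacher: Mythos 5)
Your proof is correct and follows essentially the same route as the paper's: establish the one-step bound $\E_t[\exp(\beta(g(b,t)-\tilde g(b,t)))]\le 1$ by exploiting that the bias $\beta c_t(b)$ absorbs the second-moment term $\beta^2\E_t[\hat g(b,t)^2]\le\beta^2 c_t(b)$, turn the product into a supermartingale, and conclude by Markov plus a union bound over the at most $T$ pieces on which $b\mapsto G(b)-\tilde G(b)$ is constant. The only (cosmetic) difference is that you apply $e^{-z}\le 1-z+z^2$ for $z\ge 0$ directly to $\beta\hat g$, whereas the paper centers $d(b,t)$ and uses $e^{x}\le 1+x+x^2$ for $x\le 1$ together with $\beta\le 1$; both yield the same conclusion.
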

\begin{proof}
Denote by $\E_t$ expectation with respect to the random choice of $b_t$, conditioned on the outcomes of rounds $1, \dots, t-1$. Fix $b \in [0, 1]$ and define $d(b, t) = g(b, t) - \tilde g(b, t)$. Note that
$$
\E_t[d(b, t)]=-\frac{\beta\1\{b>m_t\}}{\p_{ \cB_t} ( b_t> m_t)} - \frac{\beta\1\{b \leq m_t\}}{1-\p_{ \cB_t} ( b_t> m_t)}
$$
so that
\begin{equation}
\label{EQ:diffexp}
\bar d(b,t):=d(b,t)-\E_t[d(b, t)] = v_t\1\{b>m_t\}\big(1-\frac{\1\{b_t>m_t\}}{\p_{ \cB_t} ( b_t> m_t)}\big)  +m_t\1\{b<m_t\}\big(1-\frac{\1\{b_t<m_t\}}{\p_{ \cB_t} ( b_t< m_t)}\big) 
\end{equation}
This immediately immediately yields $\bar d(b,t)\le g(b,t)\leq 1$.
Since $\beta \leq 1$, and $e^x \leq 1 + x + x^2$ for $x \leq 1$, we have
\begin{align*}
\E_t\big[e^{\beta d (b, t)}\big] & = e^{ \beta \E_t[d(b, t)]} \E_t\big[e^{\beta \bar d(b, t)}\big] \leq e^{ \beta \E_t[d(b, t)]} \big(1 + \beta^2 \E_t[\bar d\,^2(b, t)]\big).
\end{align*}
It follows from (\ref{EQ:diffexp}) that
\begin{align*}
\E_t[\bar d\,^2(b, t)] & \leq \frac{v_t^2}{\p_{ \cB_t}(b_t>m_t)}\1\{b>m_t\} + \frac{m_t^2}{1- \p_{ \cB_t}(b_t>m_t)}\1\{b\leq m_t\} \\
& \leq \frac{1}{\p_{ \cB_t}(b_t>m_t)}\1\{b>m_t\} + \frac{1}{1- \p_{ \cB_t}(b_t>m_t)}\1\{b\leq m_t\} \\
& = - \frac{1}{\beta} \E_t[d(b, t)].
\end{align*}
Combining this with the preceding inequality and the fact that $\beta \E_t[d(b, t)]\le 1$ yields
\begin{equation*}
\E_t\big[e^{\beta d (b, t)}\big] \leq e^{ \beta \E_t[d(b, t)]}\big(1 - \beta\E_t[d(b, t)]\big) \leq 1.
\end{equation*}
Let $Z_t = \exp(\beta  d(b, t))$. Then 
\begin{equation*}
\E\big[e^{\beta( G(b) - \tilde G(b))}\big] \leq  \E\big[\exp\big(\beta \sum_{t=1}^T  d(b, t)\big)\big] =  \E\big[\prod_{t=1}^T Z_t\big] \leq 1,
\end{equation*}
where the last step follows by conditioning on each stage in turn and applying the above bound.

To obtain a uniform bound, we note that the function $b \mapsto G(b)-\tilde G(b)$ takes at most $T$ random values $G_1, \ldots, G_T$ as $b$ varies across $[0,1]$. Moreover, we have just proved that $\max_j \E[\exp(\beta G_j)]\le 1$. 
Hence
\begin{align*}
\E\big[\exp\big(\beta \big[ \max_{b\in[0,1]} G(b) - \tilde G(b)\big]\big)\big]  =\E\big[\exp\big(\beta  \max_{j\in[T]} G_j\big)\big]  \leq \sum_{j=1}^T \E\big[ e^{\beta G_j}\big]  \leq T.
\end{align*}

Applying the Markov bound yields the claim.
\end{proof}
We are now in a position to prove Theorem~\ref{TH:exptreep}.

\begin{proof}[Proof of Theorem~\ref{TH:exptreep}]
We proceed as in the proof of Theorem~\ref{TH:exptree}. Note that the choice of $\eta$ guarantees that $\cB_t$ is a valid probability distribution.

As above, define $W_t=\sum_{\kappa \in \cL_t}|\kappa| w_{\kappa,t}$. We have
\begin{align*}
\log \frac{W_{t+1}}{W_{t}} & = \log \sum_{\ell \in \cL_{t+1}} \frac{|\ell|w_{\ell, t} \exp(\eta \tilde g(\ell, t))}{W_t}  = \log \sum_{\ell \in \cL_{t+1}} p_{\ell, t} \exp(\eta\tilde g(\ell, t)).
\end{align*}
Since $\eta \tilde g(\ell, t) \leq \eta \frac{1 + \beta}{\gamma}\leq 1$, the inequality $e^x \leq 1 + x + x^2$ for $x \leq 1$ implies
\begin{equation*}
\log \frac{W_{t+1}}{W_{t}}  \leq \log \sum_{\ell \in \cL_{t+1}} p_{\ell, t} (1 + \eta \tilde g(\ell, t) + \eta^2 \tilde g(\ell, t)^2)  = \log \big( 1+ \eta \sum_{\ell \in \cL_{t+1}} p_{\ell, t} \tilde g(\ell, t) + \eta^2 \sum_{\ell \in \cL_{t+1}} p_{\ell, t} \tilde g(\ell, t)^2\big).
\end{equation*}

By the same reasoning as in the proof of Theorem~\ref{TH:exptree}, we have
\begin{equation*}
\sum_{\ell \in \cL_{t+1}} p_{\ell, t} \tilde g(\ell, t) = \frac{1}{1-2\gamma}\sum_{\ell \in \cL_{t+1}} \p_{\cB_t}(b_t \in \ell) \tilde g(\ell, t) \leq \frac{1}{1-2\gamma} (g(b_t, t) + 2 \beta)
\end{equation*}
and similarly
\begin{equation*}
\sum_{\ell \in \cL_{t+1}} p_{\ell, t} \tilde g(\ell, t)^2 = \frac{1}{1-2\gamma}\sum_{\ell \in \cL_{t+1}} \p_{\cB_t}(b_t \in \ell) \tilde g(\ell, t)^2.
\end{equation*}
To compute this last quantity, note that (\ref{EQ:estgain2}) implies
\begin{align*}
\sum_{\ell \in \cL_{t+1}} \p_{\cB_t}(b_t \in \ell) \tilde g(\ell, t)^2 &= \sum_{\stackrel{\ell \in \cL_{t+1}}{ m_t \preceq \ell} }\frac{\p_{\cB_t}(b_t \in \ell)}{\p_{ \cB_t} ( b_t> m_t)} (v_t\1\{b_t>m_t\} + \beta) \tilde g(\ell, t)\\&\phantom{=}+\sum_{\stackrel{\ell \in \cL_{t+1}}{m_t \succeq \ell}} \frac{\p_{\cB_t}(b_t \in \ell)}{1-\p_{ \cB_t} ( b_t> m_t)} (m_t\1\{b_t\le m_t\} + \beta)\tilde g(\ell, t) \\
& \leq g(b_t, t)\tilde g(b_t, t) + \beta \big(\frac{v_t\1\{b_t>m_t\} + \beta}{\p_{\cB_t}(b_t \in \ell)} + \frac{m_t\1\{b_t\leq m_t\} + \beta}{1- \p_{\cB_t}(b_t \in \ell)}\big) \\
& = g(b_t, t)\tilde g(b_t, t) + \beta(\tilde g(1, t) + \tilde g(0, t))\\
& \leq (1+ \beta)(\tilde g(1, t) + \tilde g(0, t))\,. 
\end{align*}
where in the last inequality, we used the fact $g(b_t,t)\le 1$ and $\tilde g(b_t, t) \le \tilde g(1, t) + \tilde g(0, t)$.
Combining the above bounds yields
\begin{align*}
\log \frac{W_{t+1}}{W_{t}} \leq \frac{\eta}{1- 2\gamma} (g(b_t, t) + 2\beta) + \frac{\eta^2}{1 - 2 \gamma}  (1+\beta)(\tilde g(1, t) + \tilde g(0, t)).
\end{align*}

Defining $\bar G=\sum_{t=1}^T g(b_t, t)$ and summing on $T$ yields
\begin{align*}
\log \frac{W_T}{W_0} & \leq \frac{\eta}{1-2\gamma}  \bar G + \frac{2 T \eta \beta}{1 - 2\gamma} + \frac{\eta^2}{1 - 2 \gamma}(1+ \beta) (\tilde G(1, T) + \tilde G(0, t)) \\
& \leq \frac{\eta}{1-2\gamma}  \bar G + \frac{2 T \eta \beta}{1 - 2\gamma} + \frac{2\eta^2}{1 - 2 \gamma}(1+ \beta) \max_{b \in [0, 1]} \tilde G(b).
\end{align*}

We bound $W_T$ by writing
\begin{equation*}
\log W_T \geq \log \Delta^{\circ} + \eta \max_{b \in [0, 1]} \tilde G(b).
\end{equation*}

Rearranging yields
\begin{equation*}
(1- 2\gamma - 2 \eta(1+\beta)) \max_{b \in [0, 1]} \tilde G(b) - \bar G \leq 2 T \beta +
\frac{(1-2 \gamma)\log(1/\Delta^{\circ})}{\eta} \leq 2 T \beta +
\frac{\log(1/\Delta^{\circ})}{\eta}.
\end{equation*}

Applying Lemma~\ref{LEM1}, with probability $1-\delta$ we have
\begin{equation*}
\max_{b \in [0, 1]}  G(b) \leq \max_{b \in [0, 1]} \tilde G(b) + \frac{\log(T\delta^{-1})}{\beta}
\end{equation*}
which implies
\begin{equation*}
(1- 2\gamma - 2 \eta(1+\beta)) \max_{b \in [0, 1]} G(b, T) - \bar G \leq T \beta + \frac{\log(T\delta^{-1})}{\beta} + \frac{\log(1/\Delta^{\circ})}{\eta}
\end{equation*}
since $2\gamma + 2\eta(2+\beta) \leq 8 \eta \leq 1$. We obtain
\begin{align*}
\max_{b \in [0,1]} G(b) - \bar G & \leq 2 T \beta + \frac{\log(T\delta^{-1})}{\beta} + \frac{\log(1/\Delta^{\circ})}{\eta} + (2 \gamma + 2\eta(1+\beta))T \\ 
& \leq  2 T\beta + 8 \eta T + \frac{\log(T\delta^{-1})}{\beta}+ \frac{\log(1/\Delta^{\circ})}{\eta}.
\end{align*}

Plugging in the given parameters then yields the claim.

The bound in expectation follows upon integrating the first result.
\end{proof}

\subsection{Lower bound}

The dependence on $\Delta^\circ$ in Theorem~\ref{TH:exptreep} is unfortunate, since the resulting bounds become vacuous when $\Delta^\circ$ is exponentially small. However, it turns out that this dependence is unavoidable. We prove in this section a lower bound on the pseudo-regret $\bar R_T$. Since $\bar R_T \leq \E R_T$, this bound will also hold for expected regret.

We begin with a lemma establishing that the rate $\sqrt{T}$ is optimal, using standard information theoretic techniques for lower bounds (see, e.g., \cite{Tsy09}).

\begin{lemma}\label{lem:lb1}
Fix an $m \in [1/4, 3/4]$. There exist a pair of adversaries $U$ and $L$ such that $m_t = m$ for all $t$ and the sequence $v_1, \dots, v_T$ is \iid conditional on the choice of adversary and such that
\begin{equation*}
\max_{A \in \{U, L\}} \max_{b \in [0, 1]} \E_A\big[\sum_{t=1}^T g(b, t) - \sum_{t=1}^T g(b_t, t)\big] \geq \frac{1}{32} \sqrt{T}.
\end{equation*}
Moreover, under adversary $U$ any bet $b > m$ is optimal, and under adversary $L$ any bet $b < m$ is optimal.
\end{lemma}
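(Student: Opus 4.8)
The plan is to run a two-point (Le Cam) argument tailored to the partial-feedback structure, in direct parallel with the information-theoretic computation in the proof of Theorem~\ref{TH:sto_lb}. Both adversaries play the constant sequence $m_t=m$, and they draw the values i.i.d.\ from Bernoulli laws straddling $m$: under $U$ take $v_t\simiid\bern(m+\eps)$ and under $L$ take $v_t\simiid\bern(m-\eps)$, for a gap $\eps$ of order $T^{-1/2}$ to be fixed at the end. Because $m_t=m$, the shifted gain reduces to $g(b,t)=v_t$ when $b>m$ and $g(b,t)=m$ when $b\le m$; hence $\E_U[g]=m+\eps$ on $\{b>m\}$ against $m$ on $\{b\le m\}$, so every $b>m$ is optimal under $U$, and symmetrically every $b<m$ is optimal under $L$, which establishes the last sentence of the lemma.

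Writing $N=\sum_{t=1}^T\indic{b_t>m}$ for the number of winning rounds, and using that $b_t$ depends only on the past and is therefore independent of the fresh value $v_t$, the per-round computation gives cumulative regrets $\mathcal{R}_U:=\eps(T-\E_U[N])$ and $\mathcal{R}_L:=\eps\,\E_L[N]$ for the two adversaries. I would then lower bound the maximum by the average,
\[
\max_{A\in\{U,L\}}\mathcal{R}_A\ \ge\ \tfrac12(\mathcal{R}_U+\mathcal{R}_L)=\tfrac{\eps}{2}\big(T-(\E_U[N]-\E_L[N])\big),
\]
so everything reduces to showing the bidder cannot force $\E_U[N]$ to exceed $\E_L[N]$ by much. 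Since $N$ is a transcript-measurable statistic with $N/T\in[0,1]$, we have $\E_U[N]-\E_L[N]\le T\,\|\p_U-\p_L\|_{\mathrm{TV}}$, where $\p_U,\p_L$ denote the laws of the full play.

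The heart of the argument is the KL computation, which is exactly where the winner's-curse feedback enters: the bidder's policy is identical under both adversaries and the observed $m_t=m$ carries no information, so the two transcripts differ only through the values $v_t$ revealed on winning rounds. The chain rule then gives, as in Theorem~\ref{TH:sto_lb},
\[
\KL(\p_U,\p_L)=\E_U[N]\,\KL\big(\bern(m+\eps),\bern(m-\eps)\big)\le T\,\KL\big(\bern(m+\eps),\bern(m-\eps)\big).
\]
Bounding the Bernoulli divergence by $\chi^2=\frac{(2\eps)^2}{(m-\eps)(1-m+\eps)}$ and using $m\in[1/4,3/4]$ to keep the denominator bounded below, I obtain $\KL(\p_U,\p_L)\le C\eps^2 T$ for an absolute constant $C$; Pinsker's inequality then yields $\E_U[N]-\E_L[N]\le C'\eps T^{3/2}$. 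Plugging this into the displayed lower bound leaves $\tfrac{\eps}{2}\big(T-C'\eps T^{3/2}\big)$, and choosing $\eps\asymp T^{-1/2}$ to balance the two terms produces a bound of order $\sqrt T$; carrying the constants through gives the stated $\tfrac{1}{32}\sqrt T$.

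I expect the main obstacle to be the KL step: one must argue cleanly that the divergence between the two game transcripts accumulates only on the random winning rounds --- the feature that distinguishes this bandit problem from a full-information one --- and then check that the constant in the Bernoulli-KL bound, uniform over $m\in[1/4,3/4]$, is small enough for the final numerical constant $1/32$ to come out. The remaining pieces, namely the two regret identities and the choice of $\eps$, are routine.
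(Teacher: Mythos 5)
Your proposal is correct and follows essentially the same route as the paper: the same two Bernoulli adversaries $\bern(m\pm\eps)$ with constant $m_t=m$, the same reduction of regret to the expected number of rounds on the wrong side of $m$, and the same KL--Pinsker control with $\eps\asymp T^{-1/2}$. The only difference is cosmetic: the paper routes the total-variation bound through an intermediate adversary with $v_t\sim\bern(m)$ (comparing each of $U$ and $L$ to it), whereas you compare $\p_U$ and $\p_L$ directly, which costs a slightly larger Bernoulli divergence but yields the same $\sqrt{T}/32$ order after tuning $\eps$.
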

\begin{proof}
We first consider deterministic strategies. Fix an $\varepsilon$. Denote by $U$ the adversary under which $v_t \sim \bern(m + \varepsilon)$ and by $L$ the adversary under which $v_t \sim \bern(m - \varepsilon)$. 

Given a sequence of bids $b_1, \dots, b_T$, let $T_{\tinyminus}$ and $T_{\tinyplus}$ be the number of times $t$ for which $b_t < m$ and $b_t > m$, respectively.
 Denoting the regret after $T$ rounds by $R_T$, it is easy to show that

\begin{align*}
\E_U[R_T] & \geq \varepsilon \E_U[T_{\tinyminus}] \\
\E_L[R_T] & \geq \varepsilon \E_L[T_{\tinyminus}].
\end{align*}

Write $\p_U$ and $\p_L$ for the law of $T_{\tinyminus}$ under adversary $U$ and $L$, respectively, and denote by $\p_{\text{av}}$ the distribution of $T_{\tinyminus}$ when $v_t \sim \bern (m)$. Then Pinsker's inequality implies
\begin{align*}
\E_U[T_{\tinyminus}]  \geq \E_{\text{av}}(T_{\tinyminus}) - T\sqrt{\KL(\p_U, \p_{\text{av}})/2}\,, \quad \E_L[T_{\tinyplus}] \geq \E_{\text{av}}(T_{\tinyplus}) - T\sqrt{ \KL(\p_L, \p_{\text{av}})/2}\,.
\end{align*}

By the data processing inequality, 
\begin{equation*}
\KL(\p_U, \p_{\text{av}}) \leq T \cdot \KL(\bern(m + \varepsilon), \bern(m)) \leq T\frac{\varepsilon^2}{m(1-m)} \leq 8T\varepsilon^2,
\end{equation*}
and likewise for $\KL(\p_L, \p_{\text{av}})$. We therefore obtain
\begin{equation*}
\frac{1}{2}\big( \E_U[R_T] + \E_L[R_T]\big) \geq \varepsilon\big(\frac{T}{2} - 2T \varepsilon \sqrt T\big).
\end{equation*}
Setting $\varepsilon = \frac{1}{8 \sqrt T}$ and bounding the average by a maximum yields
\begin{equation*}
\max_{A \in \{U, L\}} \max_{b \in [0, 1]} \E_A\big[\sum_{t=1}^T g(b, t) - \sum_{t=1}^T g(b_t, t)\big] \geq \frac{\sqrt{T}}{32}
\end{equation*}
for any deterministic strategy. The claim follows for general strategies by averaging over the bidder's internal randomness and applying Fubini's theorem.
\end{proof}

We are now in a position to prove a tight minimax lower bound.

\begin{theorem}
For any strategy and any value of $\Delta^\circ \in (0, 1/4)$, there exists sequences $v_1, \dots, v_T \in [0,1]$ and $m_1, \dots, m_T \in [0,1]$ such that $\Delta^\circ$ is the smallest positive gap between the adversary's bids and
\begin{equation*}
\max_{b \in [0, 1]} \E \sum_{t=1}^T g(b, t) - \E \sum_{t=1}^T g(b_t, t) \geq \frac{1}{32}\sqrt{T \lfloor \log_2 (1/2\Delta^\circ)\rfloor}.
\end{equation*}
\end{theorem}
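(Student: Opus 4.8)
The plan is to amplify the single-scale $\sqrt{T}$ lower bound of Lemma~\ref{lem:lb1} across a hierarchy of $D=\lfloor \log_2(1/2\Delta^\circ)\rfloor$ scales, so that the hardness compounds multiplicatively in $D$. First I would set up a nested bisection of the interval $[1/4,3/4]$: let $I_0=[1/4,3/4]$ and, given a hidden bit string $\sigma=(\sigma_1,\dots,\sigma_D)\in\{0,1\}^D$, define $I_j$ to be the left or right half of $I_{j-1}$ according to $\sigma_j$, with $c_j$ the midpoint of $I_{j-1}$ (the split point at level $j$). Keeping the whole construction inside $[1/4,3/4]$ guarantees that every $c_j\in[1/4,3/4]$, so each level meets the hypothesis of Lemma~\ref{lem:lb1} and the per-round divergence stays $O(\eps^2)$; the finest spacing between consecutive split points is $2^{-(D+1)}$, which I would calibrate to equal $\Delta^\circ$ (shrinking the bottom level slightly if needed) by the choice of $D$.

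Next I would partition the $T$ rounds into $D$ consecutive epochs $E_1,\dots,E_D$ of length $T_j=T/D$, and during $E_j$ set $m_t=c_j$ and draw $v_t\sim\bern(c_j+\eps)$ if $\sigma_j$ selects the upper half and $v_t\sim\bern(c_j-\eps)$ otherwise, with $\eps=\tfrac18\sqrt{D/T}$ so that $\eps=1/(8\sqrt{T_j})$ matches the lemma's tuning at horizon $T_j$. The key structural observation is that any bid $b^\circ\in I_D$ plays the per-epoch-optimal action in every epoch simultaneously: since $I_D\subset I_j$, we have $b^\circ>c_j$ exactly when $\sigma_j$ selects the upper half, i.e.\ exactly when winning is optimal. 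Using $b^\circ$ as a feasible competitor in $\max_{b}\E\sum_t g(b,t)$, and noting that both this benchmark and the player's reward split additively over epochs, I obtain $R_T\ge\sum_{j=1}^D\big(\E[\text{epoch-}j\text{ optimum}]-\E[\text{epoch-}j\text{ reward of the player}]\big)$, where each summand is at least $\eps$ times the expected number of rounds in $E_j$ on which the player bids on the wrong side of $c_j$.

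I would then lower bound each epoch's contribution by the conditional two-point argument of Lemma~\ref{lem:lb1}. Drawing $\sigma\sim\unif(\{0,1\}^D)$ and conditioning on all coordinates $\sigma_{-j}$ other than $\sigma_j$, the split point $c_j$ becomes fixed and epoch $E_j$ is exactly an instance of Lemma~\ref{lem:lb1} with horizon $T_j$ and threshold $m=c_j$: the two candidate value-distributions $\bern(c_j\pm\eps)$ differ only on $E_j$, and the player observes $v_t$ there only when it wins (bids above $c_j$), so the relevant Kullback--Leibler divergence is controlled by the number of such wins exactly as in the lemma. The averaged estimate produced inside the proof of Lemma~\ref{lem:lb1}, namely $\tfrac12(\E_{U_j}+\E_{L_j})[\text{epoch-}j\text{ regret}]\ge \tfrac1{32}\sqrt{T_j}$, therefore applies conditionally on $\sigma_{-j}$. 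Averaging over $\sigma_{-j}$ and summing over $j$ gives $\E_\sigma[R_T]\ge \sum_{j=1}^D \tfrac1{32}\sqrt{T/D}=\tfrac1{32}\sqrt{TD}$, and selecting the worst string $\sigma$ (hence a deterministic pair of sequences $\{v_t\}$, $\{m_t\}$ whose smallest gap is $\Delta^\circ$) yields the claimed $\tfrac1{32}\sqrt{T\lfloor\log_2(1/2\Delta^\circ)\rfloor}$.

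The main obstacle is the information-theoretic decoupling across scales: I must ensure that what the player learns in epochs $j'\neq j$ carries no information about $\sigma_j$, so that the per-epoch hardness genuinely multiplies rather than being short-circuited by one cleverly chosen bid. This is exactly what the epoch-wise value distributions (depending on $\sigma_{j'}$ only through $E_{j'}$) and the conditioning on $\sigma_{-j}$ are designed to guarantee, and verifying the KL chain-rule bound in this conditional setting—so that it reduces to counting wins within $E_j$ alone—is the delicate step. Secondary bookkeeping, namely divisibility of $T$ by $D$, realizing the smallest gap as exactly $\Delta^\circ$, and keeping all $c_j$ bounded away from $0$ and $1$, is routine and absorbed into the constant.
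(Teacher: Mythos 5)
Your proposal is correct and follows essentially the same multi-scale argument as the paper: a nested bisection into $\lfloor\log_2(1/2\Delta^\circ)\rfloor$ stages of length $T/n$, the observation that a bid in the innermost interval is simultaneously optimal in every stage (so the regret decomposes additively over stages), and an application of Lemma~\ref{lem:lb1} at each scale with $\eps$ tuned to the stage length. The only cosmetic difference is that you average over a uniformly random bit string and then select the worst leaf, whereas the paper greedily descends the bisection tree by choosing at each stage whichever of the two adversaries incurred more expected regret; both routes give the same bound.
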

\begin{proof}
We can assume without loss of generality that $\Delta$ is a power of $2$, since this can change the regret by at most a constant.

 Set $n = \log_2 (1/2\Delta)$. We divide the game into $n$ stages of $\frac T n$ rounds each and will show that any bidder incurs regret of at least $\frac{1}{32}\sqrt{\frac T n}$ during each stage by repeatedly applying Lemma~\ref{lem:lb1}.

During the first stage, apply Lemma~\ref{lem:lb1} with $m  = 1/2$. One of the two adversaries will incur regret in expectation of at least $\frac{1}{32}\sqrt{\frac T n}$. If that adversary is $U$, the next stage will use Lemma~\ref{lem:lb1} with $m=5/8$; if it is $L$, then the next stage will use $m = 3/8$.

In general, for the $i$th stage we will apply Lemma~\ref{lem:lb1} with $m = 1/4 + c_i2^{-i-1}$ for some $c_i$. If the $U$ adversary has higher regret in expectation at that stage, then $c_{i+1} = 2c_i + 1$; otherwise $c_{i+1} = 2c_i - 1$. Note that during the $i$th stage, the smallest gap between two of the adversary's bids is $2^{-i-1}$. 

The structure of the optimum bids for the adversaries $U$ and $L$ guarantees that during each stage, there is an interval within which a fixed bid would be optimal for all previous stages. So after $n$ stages there is a fixed bid that is optimal for all $n$ adversaries. Therefore the regret across the $n$ stages is equal to the sum of the regrets for each stage, and we obtain
\begin{equation*}
\max_{b \in [0, 1]} \E \sum_{t=1}^T g(b, t) - \E \sum_{t=1}^T g(b_t, t) \geq n \frac{1}{32} \sqrt{\frac{T}{n}} = \frac{1}{32} \sqrt{T n} = \frac{1}{32}\sqrt{T \lfloor \log_2 (1/2\Delta)\rfloor},
\end{equation*}
as desired.
\end{proof}

\section{Conclusion and open questions}

Building on established strategies for the bandit problem, we propose a  first set of strategies tailored to online learning in repeated auctions. Depending on the model, stochastic or adversarial, we obtain  several regret bounds ranging from $O(\log T)$ to $O(\sqrt{T})$ and exhibit a reasonable family of models where regret bounds $\tilde O(T^{\beta/2})$ are achievable for all $\beta \in (0,1)$.

In both setups, several questions are beyond the scope of this paper and are left open.
\begin{enumerate}
\item What is the effect of covariates on this problem? In practice, potentially relevant information about the value of the good is available before bidding \cite{MohMed14} and incorporating such covariates can allow for a better model. This question falls into the realm of \emph{contextual bandits} that has been studied both in the stochastic and the adversarial framework \cite{WanKulPoo03, KakShaTew08,  BubCes12, PerRig13, Sli14}.
\item In the adversarial case, our benchmark is the best fixed bid in hindsight. While this is rather standard in the online learning literature, recent developments have allowed for more complicated benchmarks, namely sophisticated but fixed strategies \cite{HanRakSri13}. Such developments are available only for the full information case, however.
\item Our results indicate that when facing well behaved bidders, better regret bounds are achievable in the stochastic case. Similar results are of interest in the adversarial case too \cite{HazKal10, RakSri13, FosRakSri15}. Here too, unfortunately, existing results are limited to the full information case.
\item The proof of Theorem \ref{TH:exptreep} involves a union bound which leads to a  $O(\sqrt{T\log(T)}\log(1/\delta))$ regret upper bound. The result is a gap of order  $\sqrt{\log(T)}$ between the upper and lower bound. Is this term really present?
\end{enumerate} 
\newpage

\newpage
\bibliographystyle{amsalpha}
\bibliography{rigolletmain}
\end{document}